\newtheoremstyle{break}
  {\topsep}{\topsep}%
  {\itshape}{}%
  {\bfseries}{}%
  {\newline}{}%
\theoremstyle{break}
\newtheorem{example}{Example}
\newtheorem{lemma}{Lemma}
\newtheorem{remark}{Remark}
\newtheorem{definition}{Definition}
\newtheorem{assumption}{Assumption}
\newcommand{\ci}{\citeasnoun}
\newcommand{\fil}{\mathcal{F}}
\newcommand{\x}{\mathcal{X}}
\newcommand{\bbr}{\mathbb{R}}
\newcommand{\p}{\mbox{\sc Price}}
\title{Simulation Methods for Robust Risk Assessment\\ and the Distorted Mix Approach}
\author{Sojung Kim \hspace{1cm} Stefan Weber   \\[1.0ex] \textit{Leibniz Universit{\"a}t Hannover} }
\date{May, 2021 \footnote{Institute of Actuarial and Financial Mathematics \& House of Insurance, Leibniz Universit\"at Hannover, Welfengarten 1, 30167 Hannover, Germany.
e-mail: {\tt sojung.kim@insurance.uni-hannover.de}, {\tt stefan.weber@insurance.uni-hannover.de}.}}
\begin{document}
\maketitle

\begin{abstract}
Uncertainty requires suitable techniques for risk assessment. Combining stochastic approximation and stochastic average approximation, we propose an efficient algorithm to compute the worst case average value at risk in the face of tail uncertainty. Dependence is modelled by the distorted mix method that flexibly assigns different copulas to different regions of multivariate distributions. We illustrate the application of our approach in the context of financial markets and cyber risk. 
\end{abstract}

\vspace{5mm}

\noindent {\bf Keywords:} Risk management; robustness; risk measures; simulation; cyber risk.

\vspace{5mm}

\section{Introduction}\label{sec1:intro}

Capital requirements are an instrument to limit the downside risk of financial companies. They constitute an important part of banking and insurance regulation, for example, in the context of Basel III, Solvency II, and the Swiss Solvency test. Their purpose is to provide a buffer to protect policy holders, customers, and creditors. Within complex financial networks, capital requirements also mitigate systemic risks. 

The quantitative assessment of the downside risk of financial portfolios is a fundamental, but arduous task. The difficulty of estimating downside risk stems from the fact that extreme events are rare; in addition, portfolio downside risk is largely governed by the tail dependence of positions which can hardly be estimated from data and is typically unknown. Tail dependence is a major source of model uncertainty when assessing the downside risk.

In practice, when extracting information from data, various statistical tools are applied for fitting both the marginals and the copulas -- either (semi-)parametrically or empirically. The selection of a copula is frequently made upon mathematical convenience; typical examples include Archimedean copulas, meta-elliptical copulas, extreme value copulas, or the empirical copula, see e.g.~\ci{QRM}. The statistical analysis and verification is based on the available data and is center-focused due to limited observations from tail events. This approach is necessarily associated with substantial uncertainty.  The induced model risk thus affects the computation of monetary risk measures, the mathematical basis of capital requirements. These functionals are highly sensitive to tail events by their nature -- leading to substantial misspecification errors of unknown size.

In this paper, we suggest a novel approach to deal with this problem. We focus on the downside risk of portfolios. Realistically, we assume that the marginal distributions of individual positions and their copula in the central area can be estimated sufficiently well. We suppose, however, that a satisfactory estimation of the dependence structure in the tail area is infeasible. Instead, we assume that practitioners who deal with the estimation problem share viewpoints on a collection of copulas that potentially capture extremal dependence. However, practitioners are uncertain about the appropriate choice among the available candidates.

The family of copulas that describes tail dependence translates into a family of joint distributions of all positions and thus a collection of portfolio distributions. To combine the ingredients to joint distributions, we take a particularly elegant approach: 
The Distorted Mix (DM) method developed by \ci{LYY14} constructs a family of joint distributions from the marginal distributions, the copula in the central area and several candidate tail copulas. A DM copula is capable of handling the dependence in the center and in the tail separately. We use the DM method as the starting point for a construction of a convex family of copulas and a corresponding set of joint distributions.\footnote{ Simple mixture approaches were previously considered in the literature, but are less flexible than the DM approach. An early contribution with an application to financial markets is \ci{hu2006}.}

Once a family of joint distributions of the positions is given, downside risk in the face of uncertainty can be computed employing a classical worst case approach. To quantify downside risk, we focus on robust average value at risk (AV@R). The risk measure AV@R is the basis for the computation of capital requirements in both the Swiss solvency test and Basel III. As revealed by the axiomatic theory of risk measures, AV@R has many desirable properties such as coherence and sensitivity to tail events, see \ci{SF}. In addition, AV@R is m-concave on the level of distributions, see  \ci{BB15}, and admits the application of well-known optimization techniques as described in \ci{RU00} and \ci{RU02}. 

To be more specific, we consider a $d$-dimensional random vector  ${\bm X} = (X_1, X_2, \dots, X_d)$ with given marginals and a copula $C$ in a set of distorted mix copulas $\mathcal{DM}_{\tilde{\mathfrak{C}}}$. Considering aggregate losses $X \; = \; \Psi(X_1, \cdots, X_d)$ for some measurable function $\Psi$, we study the worst case risk 
$$
        \max_{C \in \mathcal{DM}_{\tilde{\mathfrak{C}}} } \rho(X) 
$$
where $\rho$ signifies AV@R at some fixed level. The exact problem will be described in Section \ref{sec:wcra}.

Our model setup leads to a continuous stochastic optimization problem to which we apply a combination of stochastic approximation and sample average approximation. We explain how these techniques may be used to reduce the dimension of the mixture space of copulas. We discuss the solution technique in detail and illustrate its applicability in several examples. 
\par \vspace{3mm}
\noindent The main contributions of the paper are:
\begin{enumerate}
\item For a given family of copulas modeling tail dependence, we describe a DM framework that conveniently allows worst case risk assessment.
\item We provide an efficient algorithm that numerically computes the worst case risk and identifies worst case copulas in a lower-dimensional mixture space.
\item We successfully apply our framework in selected case studies. The considered examples are financial markets and cyber risk.
\end{enumerate}

The paper is structured as follows. Section \ref{sec2:problem} explains the DM approach to model uncertainty and formulates the optimization problem associated to the computation of robust AV@R. In Section \ref{sec3:sto_opt}, we develop an optimization solver combining stochastic approximation (i.e.,~the projected stochastic gradient method) and sample average approximation: stochastic approximation identifies candidate copulas and a good approximation of the worst-case risk; in many cases, risk is insensitive to certain directions in the mixture space of copulas, enabling us to use sample average approximation to identify worst-case solutions in lower dimensions.  Section \ref{sec4:app} discusses two applications of our framework, namely to financial markets and cyber risk. Section \ref{sec5:conc} concludes with a discussion of potential future research directions.

\subsection*{Literature}

The concept of model uncertainty or robustness is a challenging topic in practice that has also been intensively discussed in the academic literature. Risk refers to situations in which possible scenarios and their associated probabilities are known; uncertainty in contrast describes circumstances in which the probabilities of events are not known, but are characterized by a nontrivial set of probability measures. This paper considers model uncertainty of this type. 

The underlying key assumption concerns the structure of the considered probability measures. We assume that marginal distributions and the dependence structure of typical events is known; uncertainty arises from tail dependence. Risk is measured on the basis of a worst-case approach, a suitable algorithm is suggested and implemented in case studies. These structural assumptions on the class of probability measures are motivated by typical considerations in practice and distinguish our analysis from previous approaches in the literature. 

Our methodology parallels the one chosen by \ci{GX14} in the sense that both papers study the worst-case in a given class of models. However, \ci{GX14} consider probability measures that are in some neighborhood of a given benchmark model in terms of relative entropy or another divergence measure. Similarly, \ci{HH13} and \ci{BC16} study ambiguity sets defined by relative entropy. While divergence measures are not proper metrics, optimal transport costs such as Wasserstein distances are metrics under regularity conditions on the cost function;  \ci{BM19} investigate model uncertainty in such a framework. \ci{BDT20} use a similar approach and study robust optimized certainty equivalent risk measures in the context of optimal transport costs. 

Another, complementary perspective on robustness comes from statistics, as suggested by \ci{hampel1971}. By Hampel's famous theorem, the classical notion of robustness can be characterized by continuity properties of functionals with respect to the weak topology. Distribution-based convex risk measures such as average value of risk are not continuous in this sense and thus not Hampel-robust, see  \ci{CDG10} and \ci{kou}. A refined notion of Hampel robustness that corresponds to finer topologies on sets of probability measures is suggested in the seminal paper \ci{KSZ14} and applied to risk measures; the proposal in  \ci{KSZ14} lifts the corresponding issues of robust statistics to a higher level that permits a comprehensive analysis.

In the current paper, we focus on worst-case AV@R in a multi-factor model. We are interested in the worst-case risk if the dependence structure is uncertain. This is closely related to papers that derive bounds in the face of partial information about dependence, cf. \ci{EPR13}, \ci{BJW14}, \ci{BV15},  \ci{Ru16}, \ci{PRSV17}, \ci{LSWY18}, \ci{EHR18}, \ci{WEBER2018191}, and \ci{HKW20}. In contrast to these contribution, we propose an algorithmic DM approach that is based on candidate copulas; this setting is very flexible in terms of the marginal distributions and copulas that are considered. Our framework is suitable for capturing uncertainty about the dependence in specific regions of a joint distribution. A simpler setting of mixture distributions is studied in \ci{ZF09} and \ci{KR14}.

Our algorithm builds on sampling-based stochastic optimization techniques. Applications of stochastic approximation and stochastic average approximation to the evaluation of risk measures were investigated by \ci{RU00}, \ci{RU02}, \ci{DuWe07}, \ci{BFP09},  \ci{DW10}, \ci{MSG10}, \ci{SXW14},  and \ci{BFP16}. The contributions discuss different risk measures including AV@R and utility-based shortfall risk, efficient estimation including variance reduction, portfolio optimization and hedging but do not concentrate on model uncertainty.  Without a specific focus on risk measures, the relevant simulation techniques are also discussed in  \ci{KY03},  \ci{Shapiro03},  \ci{Fu06}, \ci{BPP13}, and \ci{KPH15}.  \ci{GL19} study worst-case approximations for performance measures in the face of uncertainty, based on stochastic approximation; their analysis does, however, not specifically consider uncertainty about dependence in different regions of multivariate distributions as captured by the DM approach in this paper.

\section{The Distorted Mix Approach to Model Uncertainty} \label{sec2:problem}

\subsection{Distorted Mix Copula}

Letting $(\Omega, \fil, P)$ be an atomless probability space, we consider the family of random variables $\x= L^1(\Omega, \fil, P)$.  The task consists in computing the risk $\rho (X)$ of an aggregate loss random variable $X\in \x$ for a risk measure $\rho$. A finite distribution-based monetary risk measure $\rho: \x \to \bbr$ is a functional with the following three properties:
\begin{itemize}
\item {\bf Monotonicity}: $X \leq Y \; \Rightarrow \; \rho(X) \leq \rho(Y) \quad \forall X, Y \in \x$
\item {\bf Cash-invariance}: $\rho(X+  m ) = \rho(X) + m \quad \forall X\in \x, m \in \bbr$
\item {\bf Distribution-invariance}: $P\circ X^{-1} = P \circ Y^{-1} \; \Rightarrow \; \rho(X) = \rho (Y) \quad  \forall X, Y \in \x$
\end{itemize}

We consider a specific factor structure of aggregate losses. We assume that 
$$X \; = \; \Psi(X_1, \cdots, X_d) \; \in \; \x$$ 
where ${\bm X} = (X_1, \cdots, X_d)$ is a $d$-dimensional random vector and $\Psi: \mathbb{R}^d \rightarrow \mathbb{R}$ is some measurable function. The individual components $X_i$ may depict different business lines, risk factors, or sub-portfolios, and the function $\Psi: \mathbb{R}^d \rightarrow \mathbb{R}$ summarizes the quantity of interests. Frequently used aggregations are the total loss $X = \sum_{i=1}^d X_i$ and the excess of loss treaty $X = \sum_{i=1}^d (X_i-k_i)^+$ for thresholds $k_i \in \mathbb{R}^+$.  

Computing the risk measure $\rho(X)$ requires a complete model of the random vector ${\bm X} = (X_1, \cdots, X_d)$. Let $F(x_1, \cdots, x_d)$ be its unknown d-dimensional joint distribution which we aim to understand. By Sklar's theorem, any multivariate distribution $F$ can be written as the composition of a copula $C$ and the marginal distributions $F_i$ of its components:
\begin{equation*} \label{eq:sklar}
F(x_1, \cdots, x_d) = C(F_1(x_1), \cdots, F_d(x_d)).
\end{equation*}
The typical situation in practice is as follows: 
\begin{itemize}
\item The marginals $F_1(x_1), \cdots, F_d(x_d)$ and the dependence structure in the central area, denoted by the copula $C_0$, can be estimated from available data. 
Typical examples of $C_0$ may include the Gaussian copula, the t-copula, or the empirical copula.  
\item However, due to limited observations in the tail, the copula $C_0$ might not capture the characteristics of the extreme area very well. Instead, in the face of tail uncertainty, extreme dependence should be captured by a collection of copulas instead of a single copula.  This will be explained in Section~\ref{sec:wcra}.
\end{itemize}

Before we describe our approach to model uncertainty in the next section, we introduce an important tool for combining different copulas in order to to handle the central and tail parts separately, the \emph{Distorted Mix (DM) method}, see \ci{LYY14}.  A DM copula $C$ is constructed from $m+1$ component copulas: $C_0$ for the typical area, and $C_1, \cdots, C_m$ for the extreme area.

\begin{definition}[Distorted mix copula]\label{def1}
Let $D_{ij}:[0,1] \rightarrow [0,1]$ be continuous distortion functions, i.e.,~continuous, increasing functions with $D_{ij}(0) = 0, D_{ij}(1) = 1$, and $\alpha_i \geq 0$, $i=0, \cdots, m$, $j = 1, \cdots, d$, such that
    \begin{equation}\label{ass:dmc}
        \sum_{i=0}^m \alpha_i \; = \; 1, \quad\quad\quad\quad \sum_{i=0}^m \alpha_i D_{ij}(v) \; = \; v \quad \forall \, v \in [0,1], \, j= 1, \cdots, d.
    \end{equation}
    For any collection of copulas $C_0, \cdots, C_m$, the corresponding \emph{distorted mix copula} $C: [0,1]^d \rightarrow [0,1]$ is defined by
    \begin{equation}\label{def:dmc}
        C(u_1, \cdots, u_d) \; = \; \sum_{i=0}^m \alpha_i C_i(D_{i1}(u_1), \cdots, D_{id}(u_d)).
    \end{equation}
     \end{definition}

\begin{remark}
A copula captures the dependence structure of a multivariate random vector with marginal distribution functions $F_1, F_2, \dots, F_d$ as a function of $u_1 = F_1 (x_1), u_2 = F_2 (x_2) \dots, u_d = F_d (x_d)$ with $x_1, x_2, \dots, x_d \in \bbr$. The argument $x_j$ is a quantile of $F_j$ at level $u_j$, $j=1,2, \dots, d$: Levels close to $0$ correspond to the lower tail of $(X_1, X_2, \dots, X_d)$, levels close to 1 to the upper tail, and other levels to the center of the distribution.  

In equation \eqref{def:dmc}, for $i=0, \dots, m$, the parameter $\alpha_i$ defines the probability fraction of the total dependence that is governed by copula $C_i$ which is distorted by the distortion functions $D_{i1}, D_{i2}, \dots, D_{id}$. These distortion functions  describe how the arguments (or levels) of copula $C$ are mapped to the arguments (or levels) of the ingredient copulas $C_i$. We illustrate these features in the following example.
\end{remark}

\begin{example}
Let $d=m=2$ and $\alpha_0=\alpha_1=\alpha_2 = 1/3$. We suppose that $C_1$ and $C_2$ are the comonotonic copulas, i.e., $C_1(u_1, u_2) = C_2(u_1, u_2) = \min (u_1, u_2)$, and that $C_0$ is the countermonotonic copula, i.e., $C_0(u_1, u_2) = \max(u_1 + u_2 - 1, 0)$. We set $D_{ij}(u_j) = \max\{3 \cdot(u_j - a_i),0\} \wedge 1$, $a_1 = 0$, $a_2=2/3$, $a_0=1/3$, $j= 1,2$. Obviously, the lower and upper tails are governed by the comonotonic copulas $C_1$ and $C_2$, respectively, and the central part is countermonotonic according to $C_0$. In this particular example, the dependence structure in each part is exclusively controlled by one of the copulas $C_0$, $C_1$, and $C_2$.\footnote{ The example focuses on piecewise linear distortion functions with disjoint support. This is a special case of the distorted mix model in equation \ref{def:dmc} in which different parts of the dependence structure are each controlled by a single copula $C_1, \dots, C_m$. Other choices of distortion functions capture more general dependence structures.} 
\end{example}

\subsection{Worst-Case Risk Assessment}\label{sec:wcra}

In this section, we explain our approach to risk assessment in the face of tail uncertainty. As described in the previous section, we assume that the marginals of the random vector ${\bm X} = (X_1, X_2, \dots, X_d)$ are given. Its copula is unknown, but possesses the following DM structure:
\begin{itemize}
\item Let $\mathfrak{D} = \{ D_{ik}: \, i=0, \cdots, m, \, k = 1, \cdots, d \}$  be a collection of distortion functions and
$\bm \alpha = (\alpha_0, \cdots, \alpha_m) \in [0,1]^{m+1}$  satisfying assumption \eqref{ass:dmc}. 
\item In addition, we fix a copula $C_0$ and a set $\tilde {\mathfrak{C}} $ of copulas. 
\end{itemize}
We assume that the copula of ${\bm X} = (X_1, X_2, \dots, X_d)$ belongs to the following family:
 \begin{equation*}\label{def:D}
        \mathcal{DM}_{\tilde{\mathfrak{C}}} = \left\{ \alpha_0 C_0(D_{01}(u_1), \cdots, D_{0d}(u_d)) + \sum_{i=1}^m \alpha_i \tilde{C}_i(D_{i1}(u_1), \cdots, D_{id}(u_d)), \tilde{C}_i \in \tilde {\mathfrak{C}} \ \forall i=1, \cdots, m \right\}
    \end{equation*}
The worst-case risk assessment over all feasible distributions of ${\bm X} = (X_1, X_2, \dots, X_d)$ is equal to 
    \begin{equation}\label{eq:opt1}
        \max_{C \in \mathcal{DM}_{\tilde{\mathfrak{C}}} } \rho(X) 
    \end{equation}
where $X=\Psi(X_1, \cdots, X_d)$ and ${\bm X} = (X_1, \cdots, X_d)$ has a copula $C \in \mathcal{DM}_{\tilde{\mathfrak{C}}}$ with the given marginals. 

\begin{remark}
Our approach assumes that the marginals of ${\bm X}$ and the copula $C_0$ are known; if the distortion functions are suitably chosen, $C_0$ could, for example, govern the central area. However, dependence in other regions, e.g. tail dependence, is uncertain and captured by some family $\tilde{\mathfrak{C}}$ of copulas. The key structural assumption is that ${\bm X}$ possesses a DM copula and that the distortions and associated probability fractions are fixed. These determine the composition of the copula of ${\bm X}$. The distortions and probability fraction associated to the copula $C_0$ cannot be varied; for all other distortions and associated probability fractions the corresponding copulas may flexibly be chosen from the collection $\tilde{\mathfrak{C}}$.
\end{remark}

\begin{remark}
One possible approach would be to choose $\tilde{\mathfrak{C}}$ as a finite collection of $K\geq m$ candidate copulas. In this case, the number of the DM copulas is either  $\binom{K}{m} \times m!$ or $K^m$ if we allow duplicate components. This approach has two disadvantages: 
First, from a technical point of view the corresponding discrete optimization problem involves a very high number of permutations. Computing the value function for each of them is expensive, and the Ranking and Selection (R$\&$S) method would  not be efficient in this case.  Second, with finitely many candidate copulas  also their mixtures seem to be plausible ingredients to the DM method and should not be excluded a priori.  
\end{remark}

For a given collection $\mathfrak{C}=\{C_1, C_2, \dots, C_K\}$ of $K$ candidate copulas, we consider the family $\tilde{\mathfrak{C}}$ of their mixtures. That is, any element of $\tilde{\mathfrak{C}}$ can be expressed as a convex combination of elements of $\mathfrak{C}$:
    \begin{equation*}
        \tilde{C}^{\bm{\gamma}} \; = \; \sum_{j = 1}^K \gamma_j C_j, \quad \quad
       \bm{\gamma} \in  \bigtriangleup^{K-1} = \left\{ \bm{\gamma} = \begin{pmatrix}\gamma_1 \\ \gamma_2 \\ \vdots \\ \gamma_K\end{pmatrix} \in \mathbb{R}^{K} \left| \ \sum_{j=1}^K \gamma_j = 1 \mbox{ and } \gamma_j \geq 0 \mbox{ for all } j \right.\right\},
    \end{equation*}
    where $\bigtriangleup^{K-1}$ is the standard $K-1$ simplex. The $K$ vertices of the simplex are the points $\bm{e}_i \in \mathbb{R}^{K}$, where $\bm{e}_1 = (1,0, \cdots, 0)^\top , \bm{e}_2 = (0,1, \cdots, 0)^\top$, $\dots, \bm{e}_K = (0, 0, \cdots, 1)^\top $.
With this notation, our $K$ candidate copulas ${C}_j \in \mathfrak{C}$ can be written as $C_j = \tilde{C}^{\bm{e}_j}$.

Any element in $\mathcal{DM}_{\tilde{\mathfrak{C}}} $ can now be represented by some $\bar{ \bm \gamma} = (\bm{\gamma}^1, \cdots, \bm{\gamma}^m) \in \mathbb{R}^{K \times m}$ with $\bm{\gamma}^1, \cdots, \bm{\gamma}^m \in \bigtriangleup^{K-1}$ according to the following formula:
    \begin{equation}\label{def:newdmc}
        \hat{C}^{\bm{\gamma}^1, \cdots, \bm{\gamma}^m}(u_1, \cdots, u_d)  = \alpha_0 C_0(D_{01}(u_1), \cdots, D_{0d}(u_d)) + \sum_{i=1}^m \alpha_i \tilde{C}^{\bm{\gamma}^i}(D_{i1}(u_1), \cdots, D_{id}(u_d)).
    \end{equation}
With this notation, the optimization problem \eqref{eq:opt1} can be rewritten as 
    \begin{eqnarray}\label{eq:opt2}
        && \max_{\bm{\bar{\gamma}} = (\bm{\gamma}^1, \cdots, \bm{\gamma}^m) \in (\bigtriangleup^{K-1})^m  } \; \rho\left(X^{\bar{ \bm \gamma}} \right) 
    \end{eqnarray}
where $X^{\bar{ \bm \gamma}}$ represents the aggregate loss $\Psi(X_1, \cdots, X_d)$ with $(X_1, \cdots, X_d)$ having copula $\hat{C}^{\bm{\gamma}^1, \cdots, \bm{\gamma}^m}$ and the given marginals. We call $\hat{C}^{\bm{\gamma}^1, \cdots, \bm{\gamma}^m}$ in \eqref{def:newdmc} a \textit{robust DM copula} if it attains the optimal solution of \eqref{eq:opt2}. Optimization problem \eqref{eq:opt2} enables us to search the solution inside of the multiple simplexes  and paves a way to utilize the gradient approach. 

We will now construct and explore a sampling-based optimization solver. For this purpose, we focus on one particular risk measure, the average value at risk (AV@R), also called conditional value at risk or expected shortfall. This risk measure forms the basis of Basel III and the Swiss Solvency test. If $p \in (0,1)$ is the level of the AV@R, a number close to 1, the corresponding AV@R of  the losses $X^{\bar{\bm \gamma}}$ is defined as
    \begin{equation*}\label{def:CVaR}
        c_p(\bar{\bm{\gamma}}) = \frac{1}{1-p} \int_p^1 q_t^-\left( F_{X^{\bar{\bm \gamma}} } \right) \ dt
    \end{equation*}
where $q_t^-(F) = \inf \{x \in \mathbb{R} \ | \ F(x) \ge t \}$ for a distribution $F$ and $F_{X^{\bar{\bm \gamma}} }$ stands for the distribution of $X^{\bar{\bm \gamma}}$.
Accordingly, we denote VaR by $v_p(\bar{\bm{\gamma}}) = q_p^-\left( F_{X^{\bar{\bm \gamma}} } \right)$.
With this notation, our optimization problem is 
\begin{eqnarray}\label{eq:opt3}
        &
        & 
        \max_{\bm{\gamma}^1, \cdots, \bm{\gamma}^m \in \bigtriangleup^{K-1}}c_p\left( \bm{\gamma}^1, \cdots, \bm{\gamma}^m \right)  \nonumber \\
      & = &   \max_{\bm{\bar{\gamma}}  \in (\bigtriangleup^{K-1})^m  } c_p\left( \bm{\bar{\gamma}} \right). 
\end{eqnarray}

\subsection{Sampling Algorithm}\label{subs:Sampl}

\subsubsection{Portfolio Vector} 

The factor structure of DM copulas provides the basis for adequate simulation methods (see Proposition 1, \ci{LYY14}). Samples of the copula $$\hat{C}^{\bm{\gamma}^1, \cdots, \bm{\gamma}^m}$$ defined in eq. \eqref{def:newdmc} can be generated according to the following Algorithm \ref{alg:newdmc}.

\begin{algorithm}
\caption{Sampling algorithm of the DM copula \eqref{def:newdmc} generated by $\bar{ \bm \gamma}$}\label{alg:newdmc}
    \begin{algorithmic}[1]
        \Procedure{RobustDMC}{$\bm \alpha, \bar{ \bm \gamma},\mathfrak{C}, \mathfrak{D}$} 
            \State \textbf{sample} a random variable $Z^1$ distributing discretely as $\mathsf{P}(Z^1 =i) = \alpha_i$ for $i=0, \cdots, m$ 
            \If{$Z^1 \neq 0$}
                \State \textbf{sample} a random variable $Z^2$ distributing discretely as $\mathsf{P}(Z^2 = j | Z^1) = \gamma_j^{Z^1}$ for $j=1, \cdots, K$  
            \Else{ \textbf{set} $Z^2 = Z^1 = 0$} 
            \EndIf
            \State \textbf{sample} a random vector $\bm V = (V_1, \cdots, V_d)$ from the joint distribution $C_{Z^2}$
            \For{$k=1$ \textbf{to} d}
                \State $U_k = D_{Z^1 k}^{-1}(V_k)$
            \EndFor
            \State \textbf{return} $\bm U = (U_1, \cdots, U_d)$ 
        \EndProcedure
    \end{algorithmic}
\end{algorithm}
\noindent Samples of $$(X_1, \cdots, X_d)$$ with copula $\hat{C}^{\bm{\gamma}^1, \cdots, \bm{\gamma}^m}$ and  arbitrary  marginal distributions $F_1, \;F_2, \; \dots, \; F_d$ can be generated according to the quantile transformation
  $$
        (X_1, \cdots, X_d) \; =^d \; \left( F_1^{-1} (U_1), \cdots,  F_d^{-1} (U_d) \right).
$$
\subsubsection{Aggregate Loss}\label{subsec:agloss}

The simulation of the aggregate losses $X^{\bar{ \bm \gamma}}$ is now based on a simple transformation. Setting
$$A(s) \; =  \; \left\{(u_1, \cdots, u_d): \Psi\left( F_1^{-1}(u_1),\cdots, F_d^{-1}(u_d)\right) \leq s \right\}, $$
we define distribution functions
\begin{eqnarray*}
G_0(s)  & = &  \int \textbf{1}_{A(s)} d C_0(D_{01}(u_1), \cdots, D_{0d}(u_d)), \\   
G_{ij}(s) & = & \int \textbf{1}_{A(s)} d C_j(D_{i1}(u_1), \cdots, D_{id}(u_d)), \quad i=1, \cdots, m, \; j=1, \cdots, K,
\end{eqnarray*}
and note that 
 \begin{equation}\label{eq:mixturedist}
        F_{X^{\bar{\bm \gamma}} }(s)  \; :=   \;   \mathsf{P}[ X^{\bar{\bm \gamma}} \leq s]   \;  =  \; \alpha_0 G_0(s) +  \sum_{i=1}^m \alpha_i \sum_{j=1}^K \gamma_j^i G_{ij}(s).
    \end{equation}
If  $\Psi^0$ and $\Psi^{ij}$ are distributed according to $G_0$ and $G_{ij}$, $i=1, \cdots, m, \; j=1, \cdots, K$, and  independent of $Z_1$ and $Z_2$ defined in Algorithm \ref{alg:newdmc}, then 
    \begin{equation*}
        X^{\bar{\bm \gamma}} \; =^d \; \textbf{1}_{[Z_1=0]} \Psi^0 + \sum_{i=1}^m \sum_{j=1}^K \textbf{1}_{[Z_1=i, Z_2 = j]} \Psi^{ij}. 
    \end{equation*}
This representation will be instrumental for our simulation algorithms. For later use, we denote the density functions of $G_0(s)$, $G_{ij}(s)$, and $F_{X^{\bar{\bm \gamma}} }(s)$ by $g_0(s)$, $g_{ij}(s)$, and $f_{X^{\bar{\bm{\gamma}}}}(s)$, $i=1, \cdots, m, \; j=1, \cdots, K$, respectively, provided that they exist.

\section{Optimization Solver}\label{sec3:sto_opt}

In this section, we develop an algorithm solving problem \eqref{eq:opt3} that builds on two classical approaches: \emph{Stochastic Approximation (SA)} and \emph{Sample Average Approximation (SAA)}. While SA is an iterative optimization algorithm that is based on noisy observations, SAA first estimates the whole objective function and transforms the optimization into a deterministic problem. We combine both approaches.

The standard stochastic gradient algorithm of SA quickly approximates the worst-case risk, but the convergence to a worst-case copula is slow. It turns out that in many cases the risk is insensitive to certain directions in the mixture space of copulas. We exploit this observation in order to reduce the dimension of the problem and identify  a suitable subset of $\mathfrak{C}$ that excludes copulas whose contribution to the worst-case risk is small. We then determine a solution in the corresponding  simplex, relying on SAA, which is computationally efficient in lower dimensions only, but provides a good global solution to optimization problems, even if stochastic gradient algorithms are noisy and slow.

Our method thus first applies SA to estimate worst-case risk together with a candidate mixture from which a lower-dimensional problem is constructed. Second, SAA is used, but only in the lower-dimensional mixture space -- utilizing a large sample set that reduces noise. 

\paragraph{Step 1 -- Sampling.}
We generate $N$ independent copies of the $m\times K+1$ random variables
$\Psi^0$ and $\Psi^{ij}$
according to the distribution functions $G_0$ and $G_{ij}$, $i=1,\cdots,m,$ $j=1,\cdots,K $, respectively.

\paragraph{Step 2 -- SA Algorithm.} The \texttt{PSG-RobustAV@R} Algorithm \ref{alg:psg-cvar} discussed in Section~\ref{sec3:sa} seeks a candidate solution and terminates after a small number of iterations.  We design a stopping rule that determines when to move to the next step.

\paragraph{Step 3 -- SAA Algorithm.} 
From the solution obtained in Step 2 we construct a lower-dimensional simplex in which we search for a solution. We apply SAA on a suitable grid. The \texttt{SAA-RobustAV@R} Algorithm \ref{alg:saa-cvar} is discussed in Section~\ref{sec3:saa}. 

\subsection{Stochastic approximation: gradient approach} \label{sec3:sa}

SA is a recursive procedure evaluating noisy observations of the objective function and its subgradient. 
The algorithm moves in the gradient direction approaching a local optimum by a first-order approach (minimization and maximization require, of course, opposite signs).

\begin{algorithm} 
\caption{The projected stochastic gradient algorithm for the robust AV@R}\label{alg:psg-cvar}
    \begin{algorithmic}[1]
        \Procedure{PSG-RobustAV@R}{} 
            \State \textbf{Input} the  level $p$ of AV@R, the step sizes $\{ \delta_t = t^{-a} \}_{t \ge 1}$, the sample size sequences $\{N_t\}_{t \ge 1}$, the number of iterations $M$, 
            the PDF of $X^{\bar{\bm \gamma}_{t}}$ at iteration $t$ denoted by $f_{X^{\bar{\bm \gamma}_{t}}}(s)$,
            and the PDFs $g_0(s)$ and $g_{ij}(s)$ for $i=1, \cdots, m$ and $j=1, \cdots, K$
        
            \State \textbf{Initialization:}
            \State Set a starting state $ \bar{\bm \gamma}_{1} = \left( \bm{\gamma}^1_{1}, \cdots, \bm{\gamma}^m_{1} \right)$ with $\bm{\gamma}^i_{1} \in \bigtriangleup^{K-1}, i=1, \cdots, m$ 
            \While{terminal conditions are not met} 
                \For{$t=1$ \textbf{to} $M$}
                    \State \textbf{Simulation:} \Comment{generate $(L_1, \cdots, L_N)$ $N=N_t$ i.i.d. observations of  $X^{\bar{\bm \gamma}_{t}}$}
                    \For{$l=1$ \textbf{to} $N_{t}$}
                        \State Sample $(U_1^l, \cdots, U_d^l)$ from Algorithm \ref{alg:newdmc} with $\bar{\bm \gamma}_{t} = \left( \bm{\gamma}^1_{t}, \cdots, \bm{\gamma}^m_{t} \right)$
                        \State Set $L_l = \Psi\left( F_1^{-1} (U_1^l), \cdots,  F_d^{-1} (U_d^l) \right)$
                    \EndFor
                    \State \textbf{VaR and AV@R Estimation:}
                    \State Set $\hat v_p^{N_t} = L_{\lceil N_t p\rceil:N_t}$
                    \State Set $\hat c_p^{N_{t}} = \hat v_p^{N_{t}} + \frac{1}{N_{t}(1-p)} \sum_{i=1}^{N_{t}} (L_i - \hat v_p^{N_{t}} )^+$
                    \State \textbf{AV@R Gradient Estimation:}
                    \State Set $f_{X^{\bar{\bm \gamma}_{t}}}(s) = \alpha_0 g_0(s) +  \sum_{i=1}^m \alpha_i \sum_{j=1}^K \gamma_{j, t}^i g_{ij}(s)$ \Comment{$\gamma_{j,t}^i$ is the j-th component of  $\bm{\gamma}^i_{t}$}
                    \For{$i=1$ \textbf{to} $m$, \textbf{and} $j=1$ \textbf{to} $K$}
                        \State Set $\Delta_{i,j}(\hat c_p^{N_t}) = \frac{1}{N_t (1-p)} \sum_{l=1}^{N_t}   \frac{\alpha_i \ g_{ij} (L_l)}{f_{X^{\bar{\bm \gamma}_{t}}}(L_l)} \left(L_l - \hat v_p^{N_t} \right)  \textbf{1}_{\left[L_l \ge \hat v_p^{N_t} \right]}$
                    \EndFor
                    \State \textbf{Parameter Update - Multiple Simplexes Projection} 
                    \For{$i=1$ \textbf{to} $m$}
                        \State Set $\Delta^i_t = (\Delta_{i,1}, \cdots, \Delta_{i,K})$
                        \State Update $\bm{\gamma}^i_{t+1} = \Pi_{\bigtriangleup^{K-1}} \left( \bm{\gamma}^i_{t} + \delta_{t} \Delta^i_{t} \right)$ by Algorithm \ref{alg:proj}
                    \EndFor
                \EndFor
            \EndWhile
            \State \textbf{Output} $\hat c_p^N$ and $\bm{\gamma}^1_{t}, \cdots, \bm{\gamma}^m_{t}$
        \EndProcedure
    \end{algorithmic} 
\end{algorithm}
\begin{algorithm}
\caption{Euclidean projection of a vector $y$ onto simplex}\label{alg:proj}
    \begin{algorithmic}[1]
        \Procedure{ProjS}{$\bm y$} \Comment{$\bm y \in \mathbb{R}^K$}
            \State sort $\bm y$ into $\bm u$: $u_1 \ge u_2 \ge \cdots u_K$
            \State find $\tau = \max\{ 1\le j \le K: u_j + \frac{1}{j} (1-\sum_{k=1}^j u_k) > 0 \}$
            \State define $\lambda = \frac{1}{\tau} (1-\sum_{k=1}^\tau u_k)$ 
            \State \textbf{return} $\bm x$  s.t. $x_i = \max (y_i+\lambda, 0), i=1, \cdots, K$ \Comment{$\bm x = \Pi_{\bigtriangleup^{K-1}} (\bm y) \in \mathbb{R}^K $}
        \EndProcedure
    \end{algorithmic}
\end{algorithm}

\subsubsection{Projected stochastic gradient method}

 Algorithm  \ref{alg:psg-cvar} seeks to solve the optimization problem \eqref{eq:opt3}. At each iteration $t$  the SA algorithm first generates $N= N_t$ loss samples $L_1, \cdots, L_{N_t}$ of $X^{\bar{\bm{\gamma}}_{t}}$ according to Algorithm \ref{alg:newdmc}. SA then estimates the V@R and AV@R as follows:
    \begin{eqnarray*}
        \hat v_p^{N_t} = L_{\lceil N_t p\rceil:N_t }, \quad
        \hat c_p^{N_t} = \hat v_p^{N_t} + \frac{1}{N_t (1-p)} \sum_{i=1}^{N_t} (L_i - \hat v_p^{N_t})^+.
    \end{eqnarray*}
Here, $\lceil a\rceil$ denotes the smallest integer larger than or equal to $a$, and $L_{s:N}$ is the s-th order statistic from the $N$ observations, $L_{1:N} \le L_{2:N} \le \cdots \le L_{N:N}$.

Second, SA computes the gradients $\Delta^i_t = (\Delta_{i,1}, \cdots, \Delta_{i,K})$ of $c_p$ at $\bm{\gamma}^i_{t}$ from     \begin{eqnarray}\label{eq:GCVaR}
        \Delta_{i,j}(\hat c_p^{N}) 
        &=& \frac{1}{N(1-p)} \sum_{l=1}^N   \frac{\partial \ \log f_{X^{\bar{\bm \gamma}}} (L_l)}{\partial \gamma_j^i} \left(L_l - \hat{v}_p^N \right)  \textbf{1}_{\left[L_l \ge \hat{v}_p^N \right]} \nonumber \\
        &=& \frac{1}{N(1-p)} \sum_{l=1}^N   \frac{\alpha_i \ g_{ij} (L_l)}{f_{X^{\bar{\bm \gamma}}} (L_l)} \left(L_l - \hat{v}_p^N \right)  \textbf{1}_{\left[L_l \ge \hat{v}_p^N \right]}
    \end{eqnarray}
 for every  $i=1, \cdots, m$. 

Third, parameter updates are computed for each $i$:
    \begin{equation} \label{eq:sa}
        \bm{\gamma}^i_{t+1} = \Pi_{\bigtriangleup^{K-1}} \left( \bm{\gamma}^i_{t} + \delta_{t} \Delta^i_{t} \right),
    \end{equation}
where $\Pi_{\bigtriangleup^{K-1}}(x) = \arg \min_y \{ ||x-y|| \ | \ y \in \bigtriangleup^{K-1}\}$ is the Euclidean projection of $x$ onto the simplex, and  $\{ \delta_{t} \}_{t \ge 1}$ is the step size multiplier. 
This type of algorithm is called \textit{the projected gradient descent algorithm}.

Algorithm \ref{alg:psg-cvar}, a projection onto multiple high dimensional simplexes, applies methods described in \ci{Con16}. In contrast to these, the simple, classical projection Algorithm \ref{alg:proj}  that we included for illustration possesses the larger complexity $O(K^2)$.

\subsubsection{Convergence of SA}

The convergence of Algorithm \ref{alg:psg-cvar} is guaranteed if Assumptions \ref{ass:grd} \& \ref{ass:stepsize} below are satisfied, see Theorem 5.2.1 in \ci{KY03}. 

\begin{assumption}\label{ass:grd}
    \begin{enumerate}
        \item[(1)] The random variables $X^{\bar{\bm \gamma}}$ have a continuous distribution with finite support for all $\bar{\bm \gamma}$.
        \item[(2)] For all $\bar{\bm \gamma}$, $i,j$, the gradients $\frac{\partial}{\partial \gamma_j^i} v_p(\bar{\bm \gamma})$ and $\frac{\partial}{\partial \gamma_j^i} c_p(\bar{\bm \gamma})$ are well defined and bounded.
        \item[(3)] $X^{\bar{\bm \gamma}}$ has a positive and continuously differentiable density $f_{X^{\bar{\bm \gamma}}}$, and $\frac{\partial}{\partial \gamma_j^i}  \log f_{X^{\bar{\bm \gamma}}}(s)$ exists and is bounded for all $s$, $\bar{\bm \gamma}$, $i,j$.
    \end{enumerate}
\end{assumption}

\begin{assumption}\label{ass:stepsize}
\begin{enumerate}
    \item[(1)]  The step size sequence $\{ \delta_{t} \}_{t \ge 1}$ satisfies 
    $$\sum_{t=1}^\infty \delta_{t} = \infty,~ \delta_{t} \ge 0,~ \sum_{t=1}^\infty \delta_{t}^2 < \infty.$$
    \item[(2)] $\frac{\partial}{\partial \gamma_j^i} c_p(\bar{\bm \gamma})$ is  continuous, and    \begin{equation*}
        \sum_{t=1}^\infty \delta_{t} \left| \mathsf{E}\left[\Delta_{i,j}\left(\hat c_p^{N_{t}}\right)\right] - \frac{\partial}{\partial \gamma_j^i} c_p\left(\bar{\bm \gamma}_{t}\right) \right| < \infty
    \end{equation*}
    with probability 1 for each $i$ and $j$.
\end{enumerate}
\end{assumption}

In specific applications, these sufficient conditions for convergence are not always satisfied. However, the SA algorithm might still produce estimates that approach a solution of the problem. We will impose a switching condition that determines when we move from SA to SAA. SAA is explained in the next section.

\begin{remark}[Concavity of AV@R]

Algorithm \ref{alg:psg-cvar} converges to a local maximum. But any local maximum of the problem \eqref{eq:opt3} is even the global maximum, since $c_p$ is a concave function of $\bar{\bm{\gamma}}$, see \ci{AS13}. This property is closely related to the m-concavity of AV@R, a concavity property on the level of distributions, see \ci{Web06} and \ci{BB15}.
\end{remark}

\begin{remark}[Differentiability of AV@R]
The gradient estimate \eqref{eq:GCVaR} of AV@R in Algorithm \ref{alg:psg-cvar} belongs to the Likelihood Ratio (LR) methods due to \ci{TGM15}. A LR approach is appropriate, since the distribution of the argument $X^{\bar{\bm \gamma}}$ of the AV@R depends on $\bar{\bm \gamma}$.
\begin{enumerate}
\item The computation \eqref{eq:GCVaR} needs $g_{ij}$ as inputs. If their computation is not analytically tractable, an empirical estimator can be chosen. Other options are AEP (\ci{AEP11}) and  GAEP (\ci{AEP12}).
\item An alternative to LR gradient estimation are finite differences, as applied in algorithms of  Kiefer–Wolfowitz type (\ci{KW52}). Properties of such algorithms are discussed in \ci{BCG11}. Finite differences require less regularity in order to be applicable, but typically exhibit a worse performance.
\end{enumerate}
\end{remark}

\subsection{Sample average approximation} \label{sec3:saa}

 AV@R belongs to the class of divergence risk measures that coincide, up to a sign change, with optimized certainty equivalents. These admit a representation as the solution of a one-dimensional optimization problem, see \ci{BT07}. The minimizer can be characterized by a first order condition. For the specific case of AV@R this representation was previously described in \ci{Pflug00}, \ci{RU00}, and \ci{RU02}, and implies the following identity:
 $$ c_p (\bar{\bm{\gamma}}) = \min_{ u \in \mathbb{R} } \left\{ u + \frac{1}{1-p} \int \left(L - u\right)^+ d  F_{X^{\bar{\bm{\gamma}}}}(L) \right\}.$$ 
The mixture representation \eqref{eq:mixturedist} of the distribution function of $X^{\bar{\bm{\gamma}}}$ provides a  reformulation of the original problem \eqref{eq:opt3}:
    \begin{align}\label{opt:saa}
        &\max_{\bar{\bm{\gamma}}  \in (\bigtriangleup^{K-1})^m} \min_{ u \in \mathbb{R} }  \left\{ u + \frac{\alpha_0}{1-p} \mathsf{E}[\Psi^0- u]^+ +
        \sum_{i=1}^m \sum_{j=1}^K \frac{\alpha_i \gamma_j^i}{1-p} \mathsf{E}[\Psi^{ij}- u]^+\right\}
    \end{align}
where $\Psi^0$ and $\Psi^{ij}$ are random variables with distributions $G_0$ and $G_{ij}$, respectively. 

SAA algorithmically solves the stochastic optimization problem \eqref{eq:opt3} by first approximating the objective function by its sample average estimate and then solving the auxiliary deterministic problem.
Eq.~\eqref{opt:saa} suggests the following SAA for \eqref{eq:opt3}:
    \begin{align}\label{opt:saa2}
        &\max_{\bar{\bm{\gamma}}  \in (\bigtriangleup^{K-1})^m} \min_{ u \in \mathbb{R} }  \left\{ u + \frac{\alpha_0}{1-p} \frac{1}{N} \sum_{k=1}^N [\Psi^0_k- u]^+ + \sum_{i=1}^m \sum_{j=1}^K \frac{\alpha_i \gamma_j^i}{1-p} \frac{1}{N} \sum_{k=1}^N [\Psi^{ij}_k- u]^+\right\}
    \end{align}
The SAA procedure is described in Algorithm \ref{alg:saa-cvar}.

\begin{algorithm} 
\caption{SAA for the robust AV@R}\label{alg:saa-cvar}
    \begin{algorithmic}[1]
    \Procedure{SAA-RobustAV@R}{} 
        \State \textbf{Input} the probability level $p$ for AV@R, $\{ \Psi^0_k, \Psi^{ij}_k \}_{k=1, \cdots, N}$ $N$ realizations $\Psi^0$ and $\Psi^{ij}$ for $i=1, \cdots, m$ and $j=1, \cdots, K$, $\bm \alpha$, a grid of $\bar{\bm{\gamma}} = (\bm{\gamma}^1_1, \cdots, \bm{\gamma}^m_1)$
        \For{ every $\bar{\bm{\gamma}} = (\bm{\gamma}^1_1, \cdots, \bm{\gamma}^m_1)$ on the grid}
            \State \textbf{Initialization:}
            Initialize the lower bound $u_l$ with $\bar{p}_N(u_l) < p$, and the upper bound $u_u$ with $\bar{p}_N(u_u) > p$ 
            \State Set $p_u = \bar{p}_N(u_u)$ and $p_l = \bar{p}_N(u_l)$
            \State \textbf{Bisection method:}
            \While{$|p_u - p| > \epsilon$ and $|p_l - p| > \epsilon$} 
                \State $u_m = (u_u + u_l)/2$ and evaluate $p_m = \bar{p}_N(u_m)$
                \If{$p_m > p$} set $u_u = u_m$ and $p_u = p_m$
                \Else ~set $u_l = u_m$, and $p_l = p_m$
                \EndIf
            \EndWhile
            \If{$|p_u - p| \le \epsilon$} return $\check{u}^N(\bar{\bm{\gamma}}) = u_u$
            \Else ~return $\check{u}^N(\bar{\bm{\gamma}}) = u_l$
            \EndIf
            \State \textbf{AV@R computation:} 
            \State Compute $\check{c}_p^N(\bar{\bm{\gamma}}) = \check{u}^N + \frac{\alpha_0}{1-p} \frac{1}{N} \sum_{k=1}^N [\Psi^0_k- \check{u}^N]^+ + \sum_{i=1}^m \sum_{j=1}^K \frac{\alpha_i \gamma_j^i}{1-p} \frac{1}{N} \sum_{k=1}^N [\Psi^{ij}_k- \check{u}^N]^+$
        \EndFor
        \State \textbf{Output} $\max  \check{c}_p^N(\bar{\bm{\gamma}})$ on the grid
    \EndProcedure
    \end{algorithmic} 
\end{algorithm}

\subsubsection{Inner minimization} 

The inner minimization in \eqref{opt:saa} can numerically be solved on the basis of first order conditions that are specified in the following lemma. 

\begin{lemma}\label{lem:zeta}
    Let $\zeta(u,\bar{\bm{\gamma}}) = u + \frac{\alpha_0}{1-p} \mathsf{E}[\Psi^0- u]^+ +
    \sum_{i=1}^m \sum_{j=1}^K \frac{\alpha_i \gamma_j^i}{1-p} \mathsf{E}[\Psi^{ij}- u]^+$. 
    Then 
    \begin{eqnarray*}
        \frac{\partial^+\zeta}{\partial u}(u,\bar{\bm{\gamma}}) = \frac{-p}{1-p} + \frac{\alpha_0}{1-p} \mathsf{P}(\Psi^0 \le u) + \sum_{i,j} \frac{\alpha_i \gamma_j^i}{1-p} \mathsf{P}(\Psi^{ij} \le u), \\
        \frac{\partial^-\zeta}{\partial u}(u,\bar{\bm{\gamma}}) = \frac{-p}{1-p} + \frac{\alpha_0}{1-p} \mathsf{P}(\Psi^0 < u) + \sum_{i,j} \frac{\alpha_i \gamma_j^i}{1-p} \mathsf{P}(\Psi^{ij} < u). 
    \end{eqnarray*}
The minima of the function $u \mapsto \zeta(u,\bar{\bm{\gamma}})  $ are attained and any minimizer $z$ satisfies
    \begin{equation}\label{foc1}
        \frac{\partial^-\zeta}{\partial u}(u,\bar{\bm{\gamma}}) \le 0 \le \frac{\partial^+\zeta}{\partial u}(u,\bar{\bm{\gamma}}).
    \end{equation}
  If the distribution functions of  $\Psi^{0}$ and $\Psi^{ij}$ are continuous, the first order condition \eqref{foc1} becomes
    \begin{equation}\label{foc2}
        p = \alpha_0 \mathsf{P}(\Psi^0 \le u) + \sum_{i,j} \alpha_i \gamma_j^i \mathsf{P}(\Psi^{ij} \le u).
    \end{equation}
\end{lemma}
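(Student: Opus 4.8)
The plan is to reduce the whole statement to the elementary one-sided derivatives of the map $u \mapsto \mathsf{E}[(Y-u)^+]$ for a single integrable random variable $Y$, and then to assemble the pieces using convexity together with the normalization constraints on $\bm{\alpha}$ and the $\bm{\gamma}^i$. First I would record the pointwise one-sided derivatives of $u \mapsto (y-u)^+$ for fixed $y$: the right derivative equals $-\mathbf{1}_{[y>u]}$ and the left derivative equals $-\mathbf{1}_{[y\ge u]}$, the asymmetry at the kink $y=u$ being the source of the left/right split in the final formulas. Since the difference quotients $\tfrac{(Y-u-\epsilon)^+-(Y-u)^+}{\epsilon}$ are bounded in absolute value by $1$, dominated convergence lets me interchange limit and expectation, so for $h_Y(u):=\mathsf{E}[(Y-u)^+]$ one obtains $h_Y'^+(u)=-\mathsf{P}(Y>u)=\mathsf{P}(Y\le u)-1$ and $h_Y'^-(u)=-\mathsf{P}(Y\ge u)=\mathsf{P}(Y<u)-1$.

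Applying this with $Y=\Psi^0$ and $Y=\Psi^{ij}$, summing the weighted contributions, and adding the derivative $1$ of the linear term $u$ yields the two claimed expressions once the constant terms are collected. Those constants combine to $1-\tfrac{1}{1-p}\bigl(\alpha_0+\sum_{i,j}\alpha_i\gamma_j^i\bigr)$; here I would invoke $\sum_{i=0}^m\alpha_i=1$ together with $\sum_{j=1}^K\gamma_j^i=1$ to get $\alpha_0+\sum_{i,j}\alpha_i\gamma_j^i=\alpha_0+\sum_{i=1}^m\alpha_i=1$, so the constant collapses to $1-\tfrac{1}{1-p}=\tfrac{-p}{1-p}$, exactly as stated for both $\tfrac{\partial^+\zeta}{\partial u}$ and $\tfrac{\partial^-\zeta}{\partial u}$.

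For existence of minimizers and the first-order condition, I would use that $u\mapsto\zeta(u,\bar{\bm{\gamma}})$ is convex, being a nonnegative combination of the convex maps $h_Y$ plus a linear term. Convexity guarantees that the one-sided derivatives exist everywhere, are monotone, and that a point $z$ minimizes $\zeta$ precisely when $\tfrac{\partial^-\zeta}{\partial u}(z)\le 0\le\tfrac{\partial^+\zeta}{\partial u}(z)$, which is condition \eqref{foc1}. To see the minimum is attained, I would take limits of the right derivative: as $u\to-\infty$ all probabilities $\mathsf{P}(\Psi^0\le u),\mathsf{P}(\Psi^{ij}\le u)$ vanish and $\tfrac{\partial^+\zeta}{\partial u}\to\tfrac{-p}{1-p}<0$, while as $u\to+\infty$ they tend to $1$ and $\tfrac{\partial^+\zeta}{\partial u}\to\tfrac{-p}{1-p}+\tfrac{1}{1-p}=1>0$. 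Hence the monotone right derivative changes sign, $\zeta$ is coercive, and the infimum is achieved.

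Finally, the reduction \eqref{foc2} is immediate: if the laws of $\Psi^0$ and the $\Psi^{ij}$ are continuous then $\mathsf{P}(\,\cdot\le u)=\mathsf{P}(\,\cdot<u)$, so the left and right derivatives coincide, $\zeta$ is differentiable, and \eqref{foc1} forces $\tfrac{\partial\zeta}{\partial u}(z)=0$; multiplying by $1-p$ and rearranging gives $p=\alpha_0\mathsf{P}(\Psi^0\le u)+\sum_{i,j}\alpha_i\gamma_j^i\mathsf{P}(\Psi^{ij}\le u)$. The only delicate point, and the main obstacle, is the justification of differentiating under the expectation at the kink, i.e.\ handling the event $\{Y=u\}$ so that the right and left derivatives correctly pick up $\mathsf{P}(Y>u)$ versus $\mathsf{P}(Y\ge u)$; the uniform bound on the difference quotients makes this a routine dominated-convergence argument, but it is exactly where the left/right asymmetry originates. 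Integrability of the $\Psi^0$ and $\Psi^{ij}$, which underwrites finiteness of $h_Y$ and the coercivity argument, follows from the standing assumption $\x=L^1(\Omega,\fil,P)$.
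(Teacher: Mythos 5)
Your proof is correct and takes essentially the same approach as the paper's: both reduce the lemma to the one-sided derivatives of $u \mapsto \mathsf{E}[(\Psi - u)^+]$ for each component and collapse the constants via $\alpha_0 + \sum_{i,j}\alpha_i\gamma_j^i = 1$, the only cosmetic difference being that you justify the limit--expectation interchange by dominated convergence on the pointwise one-sided derivatives of $(y-u)^+$, whereas the paper manipulates the difference quotient of the expectation directly and uses right-continuity of the distribution functions. Your explicit arguments for attainment (sign change of the monotone right derivative) and for the first-order condition simply spell out what the paper dismisses as ``straightforward,'' so no gap remains.
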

\begin{proof}
See Appendix A2.
\end{proof}
Replacing  $\mathsf{P}(\Psi^0 \le u)$ and $\mathsf{P}(\Psi^{ij} \le u)$ in \eqref{foc1} and \eqref{foc2} by the empirical probabilities, we obtain a SAA approach to solve the root finding problems posed by the first order conditions. The sample version  of \eqref{foc2} is  $$
        \bar{p}_N(u) = \alpha_0 \frac{1}{N} \sum_{k=1}^N \textbf{1}_{[\Psi^0_k \le u]}  + \sum_{i=1}^m \sum_{j=1}^K \frac{\alpha_i \gamma_j^i}{N} \textbf{1}_{[\Psi^{ij}_k \le u]}.$$
Utilizing a simple bisection method, one can determine the root $\check{u}^N(\bar{\bm{\gamma}})$ that solves $\bar{p}_N(u) = p$.

\subsubsection{Outer maximization} 

The sample version of the outer maximization in \eqref{opt:saa} is 
$$ \max_{\bar{\bm{\gamma}}} \left\{ \underbrace{ \check{u}^N(\bar{\bm{\gamma}}) + \frac{\alpha_0}{1-p} \frac{1}{N} \sum_{k=1}^N [\Psi^0_k- \check{u}^N(\bar{\bm{\gamma}})]^+ + \sum_{i=1}^m \sum_{j=1}^K \frac{\alpha_i \gamma_j^i}{1-p} \frac{1}{N} \sum_{k=1}^N [\Psi^{ij}_k- \check{u}^N(\bar{\bm{\gamma}})]^+}_{=: \; \check{c}_p^N(\bar{\bm{\gamma}})} \right\} $$
Algorithm \ref{alg:saa-cvar} evaluates $\check{c}_p^N(\bar{\bm{\gamma}})$ for all $\bar{\bm{\gamma}}$ on a grid, compares the values of the function and thereby determines an approximate solution. 

\subsubsection{Switching condition and dimension reduction}\label{sec:switching}

The outer maximization requires the computation at many grid points and is expensive in high dimensions. We propose to identify a suitable lower-dimensional subsimplex in the space of copulas on the basis of SA, before we switch to SAA. This is justified by the fact that the worst-case risk is typically insensitive to contributions of some of the copulas in $\mathfrak{C}$.  Before we summarize the full procedure, we address the switching condition from SA to SAA.

SA produces a random sequence $(\bar{\bm{\gamma}}_t)_{t=1,2, \dots}$. We choose a certain burn-in period $t_{\min}$ and a maximal number of SA-iterations $t_{\max}$ to construct a stopping time $t^* \in \{t_{\min}, t_{\min} + 1, \dots, t_{\max} \}$. We stop at $t$ when two consecutive matrices $\bar{\bm{\gamma}}_{t-1}$ and $\bar{\bm{\gamma}}_{t}$ are close to each other according to some metrics. In the examples below, we implement the 1-norm $\|A\| = \sum_i \sum_j  | A_{ij} |$ and a threshold level of $0.01$. Moreover, we choose $t_{\min}=10$ and $t_{\max} = 50$.

When switching to SAA, the dimension of the problem is reduced as follows.   To simplify the notation, we write $$\bar{\bm{\gamma}}= 
\begin{pmatrix}
\gamma_1^1 & \gamma_1^2 & \cdots & \gamma_1^m  \\
\gamma_2^1 & \gamma_2^2 & \cdots & \gamma_2^m  \\
 \vdots& \vdots &   \ddots & \vdots   \\
\gamma_K^1 & \gamma_K^2 & \cdots & \gamma_K^m  \\
\end{pmatrix}
$$ instead of $\bar{\bm{\gamma}}_{t^*}$ where $t^*$ is the stopping time described above. Recall that the index $j=1, 2, \dots, K$ enumerates the copulas in $\mathfrak{C}$, while $i=1,2, \dots, m$ labels the weights $\alpha_i$ and corresponding distortions $D_{i1}, D_{i2}, \dots, D_{id}$ in eq.~\eqref{def:dmc} or eq.~\eqref{def:newdmc}, respectively. We assume that the weights are equal, i.e., $\alpha_i= \alpha$ $\forall i=1,2, \dots, m$; this assumption ensures that the probability fraction of the total dependence that is governed by each column of $\bar{\bm{\gamma}}$ is equal for all columns. 

We first select the number of copulas $K^*< K$ we wish to select from $\mathfrak{C}$ for the application of SAA. We distinguish the cases $K^* \leq m$ and $K^*>m$. In the first case, we identify the largest entry\footnote{If there is a tie, we select the one with the larger gradient.} from $\bar{\bm{\gamma}}$ and select the copula corresponding to it. We remove the corresponding row and the corresponding column from $\bar{\bm{\gamma}}$, identify the largest entry from the remaining matrix, and remove again the corresponding row and column. We proceed iteratively until $K^*$ copulas are selected. In the second case, i.e., $K^*>m$, all rows and columns are removed from $\bar{\bm{\gamma}}$, after $m$ copulas were selected. In this case, we proceed with selecting copulas $m+1, m+2, \dots$ as follows. We remove all rows corresponding to the $m$ copulas that were already selected, and then proceed in the same manner as described above to select the remaining copulas.

\begin{remark}
For each $i=1,2, \dots, m$, the mixture copula corresponding to ${\bm{\gamma}}^i$ governs a probability fraction $\alpha$ of the overall dependence structure in a region determined by the distortions $D_{i1}, D_{i2}, \dots, D_{id}$. The algorithm consecutively selected for different $i$ the most important element from the copulas in $\mathfrak{C}$ that were not previously selected. This guarantees that the contributions of the vectors of distortion functions corresponding to different values of $i$ are taken into consideration when the $K^*$ copulas are chosen.
\end{remark}

\subsection{Full procedure}\label{subs:fp}

We finally present a brief summary of our proposed algorithm. 
\paragraph{Step 1 -- Sampling.}
    \begin{enumerate}
        \item[1.]  Generate $N = N_t$ independent copies of the random variables $\Psi^0$, $\Psi^{ij}$ as described in Section~\ref{subs:Sampl}.      
        \item[2.]  Use the samples to estimate the densities $g_0(s)$, $g_{ij}(s)$, and $f_{X^{\bar{\bm{\gamma}}}}(s)$. The values are stored for the inter- and extrapolation.
        \item[3.]  If necessary, generate new samples according to an importance sampling density $h$.
    \end{enumerate}

\paragraph{Step 2 -- SA Algorithm.}
    \begin{enumerate}
        \item[4.] Apply \texttt{PSG-RobustAV@R} Algorithm \ref{alg:psg-cvar}. 
        Importance sampling techniques can be adopted as illustrated in Section \ref{sec:ex} below --  if applicable. 
        \item[5.] If the switching condition described in Section~\ref{sec:switching} is met, terminate the algorithm and determine a selection of the most important copulas in order to reduce the dimensionality of the problem.\\
    \end{enumerate}

\paragraph{Step 3 -- SAA Algorithm.} 
    \begin{enumerate}
        \item[6.] Construct a suitable grid on the lower-dimensional simplex. Adaptively refine the grid in a smaller domain on the basis of the results of the application of the algorithm specified in the next step, and apply the algorithm again on the new grid.
        \item[7.] Apply \texttt{SAA-RobustAV@R} Algorithm \ref{alg:saa-cvar} to find the worst case over grid points. The worst case is the estimated solution to the original problem \eqref{eq:opt3} on the lower-dimensional mixture space of copulas chosen in Task 5.   
    \end{enumerate}

\subsection{A motivating example}\label{sec:ex}
Before we discuss applications to finance markets and cyber risk, we illustrate our procedure in the context of a simple example motivated by \ci{LYY14}. 

\begin{example}[$m=2, K=5, d=2$]\label{ex1}
    Consider aggregate losses $X = X_1 + X_2 $ with individual losses $X_1, X_2 \in L^1$.  The distributions of the individual positions are inverse Gaussian with density 
    $$ x \; \mapsto \;  \sqrt{\frac{\lambda}{2\pi x^3}} \exp \left( -\frac{\lambda}{2\mu^2x} (x-\mu)^2 \right).$$ The dependence of the positions is uncertain, and we would like to evaluate the worst-case AV@R at level $p \in (0,1)$. Letting  $\alpha_0 = 1-2\alpha$ and $\alpha_1 = \alpha_2 = \alpha$ with $\alpha = 0.1$, we assume that $D_i = D_{i1} = \cdots = D_{id}$ for all $i=0,1,2$ and choose the distortion functions 
    \begin{equation}\label{ex:dst}
        D_1(x) = \frac{x - \alpha x^2}{\alpha + (1-2\alpha)x}, \; D_2(x) = \frac{\alpha x^2}{\alpha + (1-2\alpha)(1-x)}, \; D_0(x) = \frac{x - \alpha D_1(x) - \alpha D_2(x)}{1-2\alpha}.
    \end{equation} 
   The copula capturing dependence in the typical area is modeled by a d-dimensional Gaussian copula $$C_0 = C_\Sigma^{Ga} = \Phi_{\bm{0},\Sigma}(\Phi^{-1}(u_1), \cdots, \Phi^{-1}(u_d));$$ here, $\Phi$ and $\Phi_{\bm{0},\Sigma}$ signify the standard univariate normal distribution function and the multivariate Gaussian distribution function with covariance matrix $\Sigma$, respectively. 
   
The  family of copulas $\mathfrak{C}= \{ C_1, C_2, C_3, C_4, C_5 \}$ is specified as follows:    
\begin{itemize}
        \item[$C_1$:] a t-copula  $C^{t}_{\nu,P}\left(u_1, \cdots, u_d) = \bm t_{\nu,P}(t_\nu^{-1}(u_1), \cdots, t_\nu^{-1}(u_d)\right)$ where $t_\nu$ is a standard univariate t distribution with $\nu$ degree of freedom and $\bm t_{\nu,P}$ is the joint distribution with a correlation matrix $P$;
        \item[$C_2$:] a Clayton copula $C^{Cl}(u_1, \cdots, u_d)  = \left( \sum_{i=1}^d u_i^{-\theta} - d + 1 \right)^{- 1/\theta},~ 0 < \theta < \infty$;
        \item[$C_3$:] a Gumbel copula $C^{Gu}(u_1, \cdots, u_d)  = \exp\left\{ - \left[ \sum_{i=1}^d (-\log u_i)^\theta \right]^{1/\theta}\right\},~ 1 \leq \theta < \infty$; 
        \item[$C_4$:] a Frank's copula $C^{Fr}(u_1, \cdots, u_d)  = \log_\theta \left\{ 1 + \frac{\prod_{i=1}^d (\theta^{u_i} - 1)}{(\theta-1)^{d-2}} \right\},~ \theta \ge 0$; 
        \item[$C_5$:] the independence copula $\Pi(u_1, \cdots, u_d) = \prod_{i=1}^d u_i$.
    \end{itemize}
 
\end{example}

\subsubsection*{SA Algorithm}

Step 2 in the full procedure summarized in Section~\ref{subs:fp}  is the SA Algorithm \ref{alg:psg-cvar}. Its step size is given by $t^{-a}$ for $0.5 < a \leq 1$. Figure \ref{fig:perf_ait_gamma} illustrates the dynamics of the corresponding weights in the random sequence $(\bar{\bm \gamma}_t)_t$ for the five copulas in $\mathfrak{C}$ and the distortions $D_1$ and $D_2$ for a specific numerical example. We vary the step size and compare $a = 0.6, 0.7, 0.8, 0.9$ for the first 200 iterations. The approximation becomes faster for smaller $a$. 

The downside risk measures by AV@R is mainly governed by the upper tail of the losses whose dependence structure is encoded by the distortion function $D_2$. This is captured by the second column in Figure \ref{fig:perf_ait_gamma} which shows that the weights of copulas $C_2$ (Clayton copula), $C_4$ (Frank's copula), and $C_5$ (independence copula) decrease quickly to zero. The maximal AV@R is mainly determined by $\gamma_1^2$ (the weight of t-copula $C_1$ for the upper tail, $D_2$) and $\gamma_3^2$ (the weight of Gumbel copula $C_3$ for the upper tail, $D_2$). These observations suggest that dimension reduction as described in Section~\ref{sec:switching} can successfully be implemented for this example.

\begin{figure}[h]
    \centering
    \includegraphics[width=\textwidth]{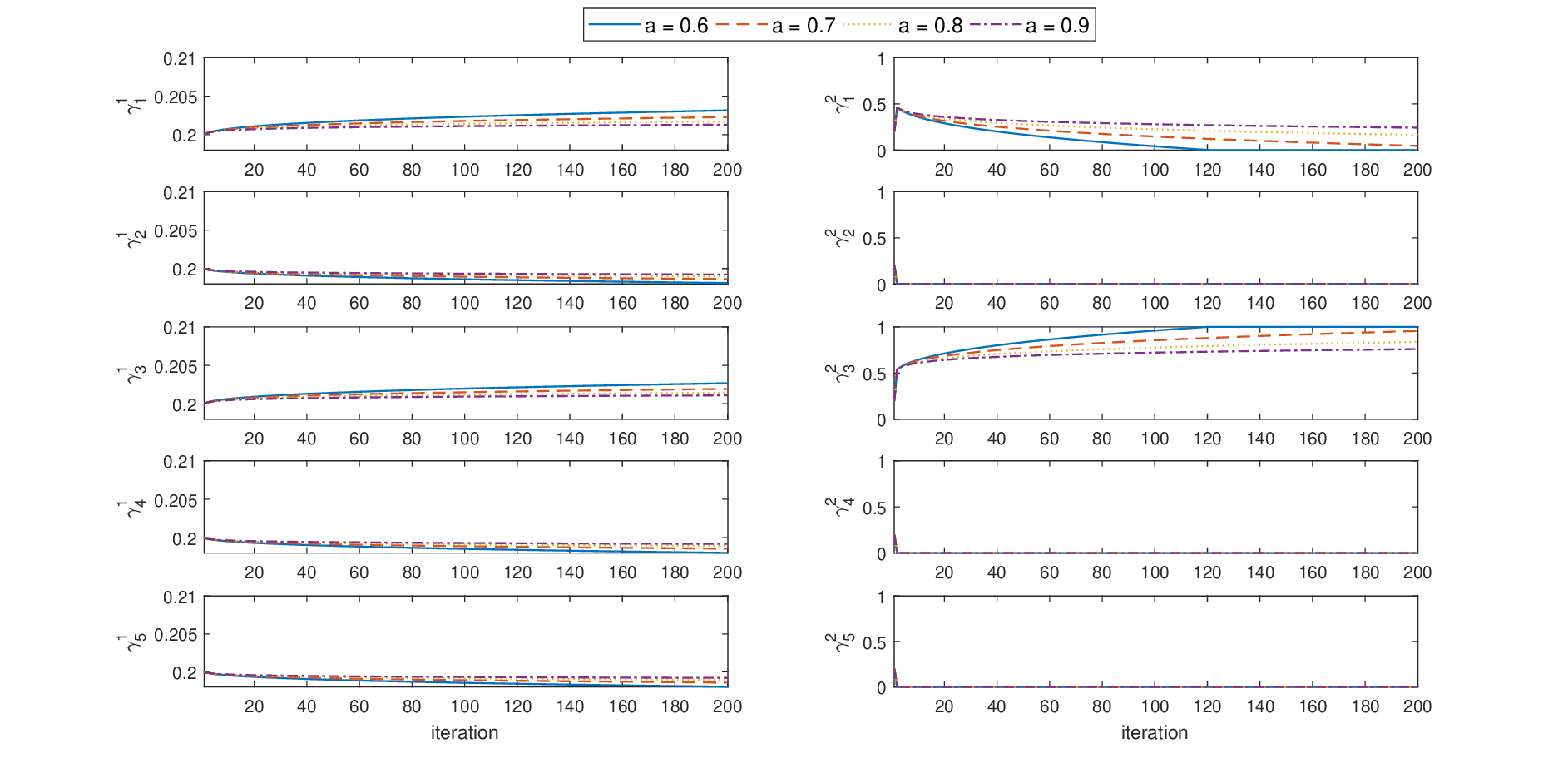}
    \caption{\label{fig:perf_ait_gamma} SA-results for varying $a \in (0.5,1]$ with  step sizes $\{ \delta_{t} \} = \{t^{-a} \}_{t \ge 1}$ in Example \ref{ex1}. The off-diagonal elements of $\Sigma$ equal  $0.7$, the diagonal elements $1$. We set $\nu = 1$, $\theta = 0.7565, 1.7095, 1.2$ for $C_2, C_3, C_4$, respectively. The IG parameters are $\mu_1 = \mu_2 = 1, \lambda_1 = 0.5, \lambda_2 = 1.2$. The sample size is fixed as $N_t= 10^6$ for every iteration $t$ and a kernel density at $1000$ equally spaced points is used based on $5\times 10^7$ sample data. 
}   
\end{figure}

The initial AV@R at $p=0.99$ for uniform\footnote{All entries of the matrix are equal.} $\bar{\bm \gamma}_1$ is reported as $13.8657$ for $a=0.6$, while AV@R is increased to $14.6832$ just after five iterations. In fact, this number is hardly distinguishable from the estimated optimal value found in SAA later on. We observe that AV@R values become insensitive to changes in  $\bar{\bm{\gamma}}_t$ after just a few iterations. This observation provides further motivation for the suggested approach  to reduce the dimension of the problem (see Section~\ref{sec:switching}).  

\subsubsection*{Importance sampling}

We explore the potential to reduce the variance of the estimators by an application of importance sampling. Recall the notation introduced in Section~\ref{subsec:agloss}. If $h$ is a density that dominates $f_{X^{\bar{\bm \gamma}}}$, we may sample from $h$ and modify Algorithm \ref{alg:psg-cvar} to obtain importance sampling estimators replacing (i) VaR $\hat v_p^N$, (ii) AV@R $\hat c_p^N$, and (iii) the AV@R gradient  $\Delta_{i,j}(\hat c_p^{N})$.

Letting $\mathcal{L} = \frac{f_{X^{\bar{\bm \gamma}}}}{h} $ be the likelihood ratio, we estimate the corresponding IS empirical distribution $\tilde {F}^{IS}_{X^{\bar{\bm \gamma}}}(s)$ by
$$ \tilde {F}^{IS}_{X^{\bar{\bm \gamma}}}(s) = \frac{1}{N} \sum_{l=1}^N \mathcal{L}(y_l) \textbf{1}_{[y_l \le s]} , \quad s \in \bbr,$$
with $y_l $  drawn iid from $h$. The corresponding IS estimators are 
    \begin{eqnarray*}
        && \tilde v_p^N = \inf \{ s: \tilde {F}^{IS}_{X^{\bar{\bm \gamma}}}(s) 
         \ge p \};~ \tilde c_p^N = \tilde v_p^N + \frac{1}{N(1-p)} \sum_{l=1}^N (y_l - \tilde v_p^{N})^+ \mathcal{L}(y_l) ;\\
        && \Delta_{i,j}(\tilde c_p^{N}) = \frac{1}{N(1-p)} \sum_{l=1}^{N}\frac{\alpha_i \ g_{ij} (y_l)}{f_{X^{\bar{\bm \gamma}}} (y_l)} \left( y_l - \hat v_p^{N} \right) \mathcal{L}(y_l)   \textbf{1}_{\left[y_l \ge \tilde v_p^{N} \right]}.
    \end{eqnarray*}
 
Motivated by eq.~\eqref{eq:mixturedist}, we propose to define the IS density $h$ as a mixture that relies on measure changes  of the distribution functions $G_0$, $G_{ij}$ with densities $g_0$, $g_{ij}$, $i=1,2, \dots, m$, $j=1,2, \dots, K$:
    \begin{equation*}\label{density_h}
        h(x) = \alpha_0 h_0(x) +  \sum_{i=1}^m \alpha_i \sum_{j=1}^K \gamma_j^i h_{ij}(x).
    \end{equation*}
For simplicity, we modify only two ingredients: 

We replace the central copula $C_0$ by an importance sampling copula $\tilde C_0$ and the marginal distributions $F_i$ by importance sampling distributions $\tilde{F}_i$; all other ingredients of the family of joint distributions of $X_1, X_2, \dots, X_d$ in Example~\ref{ex1}, in particular the collection $\mathfrak{C}$, are not changed. We thus obtain the following identities: 
\begin{eqnarray*}
    h_0(x) &=& \frac{\partial}{\partial x} \int \textbf{1}_{\tilde A(x)}\ d \tilde C_0(D_{01}(u_1), \cdots, D_{0d}(u_d)); \\
    h_{ij}(x) &=& \frac{\partial}{\partial x} \int \textbf{1}_{\tilde A(x)}\ d C_j(D_{i1}(u_1), \cdots, D_{id}(u_d)) \quad \quad \forall i,j; \\
    \tilde A(x) &=& \left\{(u_1, \cdots, u_d): \Psi\left( \tilde F_1^{-1}(u_1),\cdots, \tilde F_d^{-1}(u_d)\right) \leq x \right\}.
\end{eqnarray*}
Many other strategies to design IS distributions are, of course, possible. However, good IS methodologies for copulas are challenging. At the same time, the total computational effort must be estimated in order to evaluate the overall efficiency of competing algorithms. These issues constitute an interesting topic for future research.

To illustrate the potential of IS, we consider Example \ref{ex1}. As suggested by \ci{HSX10}, we shift the mean vector of the Gaussian copula $C_0$ to obtain $\tilde C_0$. On the marginal distributions, we utilize for each $i=1,2, \dots, m$ an Esscher measure change with parameter $w_i$ that transforms an  inverse Gaussian distribution $IG(\mu_i, \lambda_i)$ to a shifted IG distribution
$\tilde{F}_i \sim IG\left(\frac{\mu_i\sqrt{\lambda_i}}{\sqrt{\lambda_i - 2\mu_i^2 w_i}}, \lambda_i\right)$ with  $w_i \le \frac{\lambda_i}{2\mu_i^2}$ .

Numerical results display significant variance reduction. For example, in a typical case study with $10^6$  samples the variance of the crude MC estimator of robust AV@R is $0.00247$  while the importance sampling variances are reported as $0.00044$  and $0.00042$ for the historical likelihood estimator and the kernel estimator, respectively. 
We observe average variance reduction ratios around $5$ to $7$ across samples with the following new parameters: for exponential tilting $w_1 = 0.1$ (new $\mu_1^{IG} =1.2910$), $w_2 = 0.3$ (new $\mu_2^{IG} =1.4142$) and for the shifted drift for Gaussian distribution $\mu_1^G = 0.5$, and $\mu_2^G = 1$.  

\subsubsection*{Switching to SAA}

We apply the methodology described in Section~\ref{sec:switching}.  Setting $t_{\min}=10$ and $t_{\max}=50$, we run SA with uniform initial values, i.e., all entries of $\bar{\bm{\gamma}}_1$ are $1/5$,  and with a sample size $N_t = 10^5$ for step size $a=0.6$. Recall Algorithm~\ref{alg:psg-cvar} for a description of the parameters. The initial choice of $\bar{\bm{\gamma}}$ corresponds to an AV@R at level 0.99 of 13.8046. This result differs slightly from the initial value reported in Figure~\ref{fig:perf_ait_gamma} due to sampling error. 

The stopping time equals $t^*=17$ with corresponding 
$$\bar{\bm{\gamma}}^\top \;  = \;
    \begin{pmatrix}
    0.2008 & 0.1994 & 0.2007 & 0.1994 & 0.1994\\
    0.3309 & 0 & 0.6690 & 0 & 0 
    \end{pmatrix}$$
   and AV@R at level $0.99$ of $14.8094$ with an empirical standard deviation of $0.1326$ computed from the last ten iterations.
 The increments of the sequence $(\bar{\bm{\gamma}}_t)_{t=1,2, \dots}$ are already small at the stopping time $t^*$:
 $$ (\bar{\bm{\gamma}}_{17} - \bar{\bm{\gamma}}_{16} )^\top
\;  = \;  
 \begin{pmatrix}
    0.00002 & -0.00001 & 0.00001 & -0.00001 & -0.00001\\
    -0.004 & 0 & 0.004 & 0 & 0 
 \end{pmatrix}.$$
For comparison, at iteration $100$ we obtain  a corresponding
$$\bar{\bm{\gamma}}^\top \;  = \;
    \begin{pmatrix}
      0.2017 & 0.1989 & 0.2015 & 0.1988 & 0.1988\\
    0.1484 & 0 & 0.8515 & 0 & 0 
    \end{pmatrix}$$
and AV@R at level $0.99$ of $14.7715$ with an empirical standard deviation of $0.1553$ computed from the last ten iterations.
These observations indicate that SA quickly approximates the worst-case AV@R. However, the precision improves only very slowly afterwards. The convergence to the optimal value of $\bar{\bm{\gamma}}$ is slow for some components.

In order to reduce the dimension of the problem according to Section~\ref{sec:switching}, we set $K^*=2$ and select for the application of SAA the copulas $C_1$ (t-copula) and $C_3$ (Gumbel copula) on the basis of the estimate $\bar{\bm{\gamma}}_{17}$. Thus, SAA needs to be applied to a two-dimensional grid for
$$ \bar{\bm{\gamma}}^\top = 
\begin{pmatrix}
     \gamma_1^1 &  0 & \gamma_3^1 & 0 & 0\\
    \gamma_1^2 &  0 & \gamma_3^2 & 0 & 0
    \end{pmatrix}, \quad\quad \gamma_1^1 + \gamma_3^1 =1, \;  \gamma_1^2 + \gamma_3^2 =1,\; \gamma_1^1, \gamma_3^1, \gamma_1^2, \gamma_3^2 \geq 0.
    $$
On the basis of SAA with $5\cdot 10^7$ samples one observes that the worst-case risk is insensitive to dependence in the lower tail. The worst-case risk is attained for a $\gamma_3^2 =1$ (upper tail) with an AV@R at level 0.99 of $14.71$. This is illustrated in Figure~\ref{fig:3_3_colormap}. The worst-case risk in the considered model is lower than the sum of the marginal AV@Rs which is equal to $15.49$; this value corresponds to the comonotonicity of all components. This confirms that the underlying assumption (i.e., that dependence in different regions can be modeled separately and that an expert's opinion limits the choices of copulas) reduces model uncertainty.

\begin{figure}[h!]
    \centering
    \includegraphics[width=0.9\textwidth]{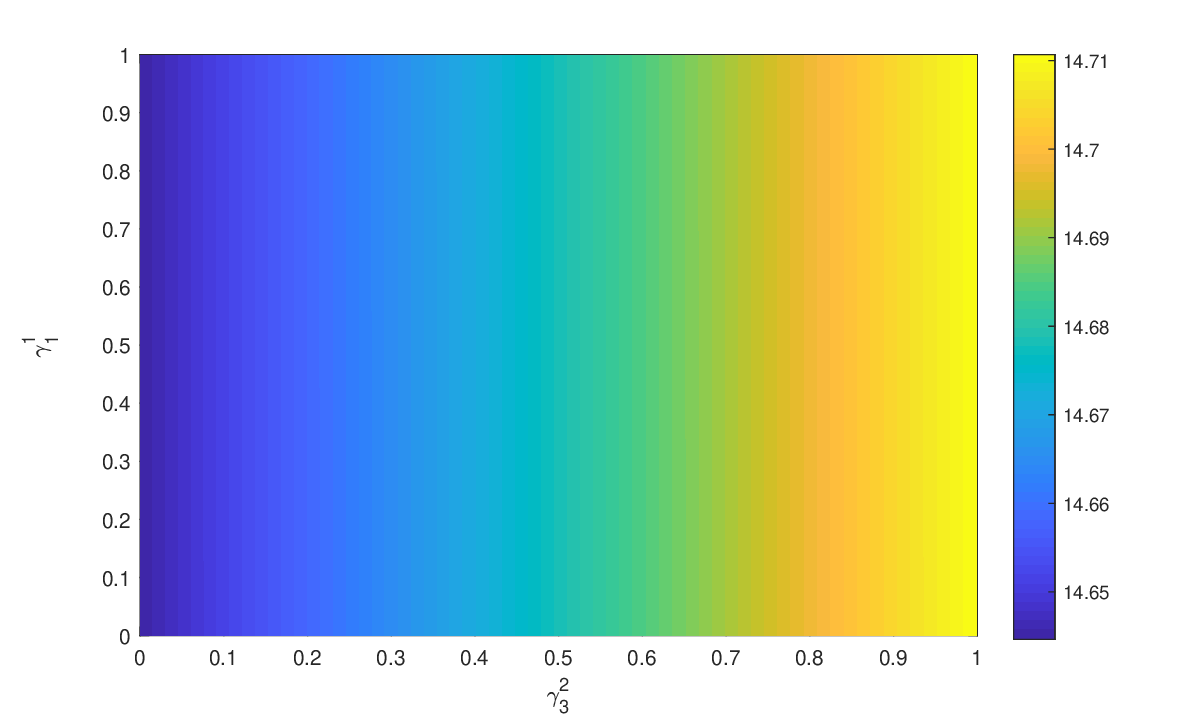}
    \caption{Color map of AV@R for parameters  $\gamma^1_1$ and $\gamma_3^2$.}\label{fig:3_3_colormap}
\end{figure}

In summary, in this case study SA is capable of quickly computing a reasonable estimate of the worst-case risk. Suitable worst-case copulas, encoded by the matrix $\bar{\bm{\gamma}}$, in a lower-dimensional mixture space can be determined by a combination of SA, the copula selection method described in Section~\ref{sec:switching}, and SAA.

\section{Applications}

Our method is flexible and can be used in multiple application domains. For the purpose of illustrating its applicability, we consider two case studies. The first example in Section~\ref{sec4:real} is based on a substantial amount of financial market data and allows a good calibration of copulas. Model risk can thereby be reduced.\footnote{In order to illustrate this claim, we include an additional case study in Appendix \ref{app_3}.} For cyber risk, the second example discussed in Section \ref{sec4:cyber}, only few observations are available which also increases the model risk.

\label{sec4:app}

\subsection{Financial markets}\label{sec4:real}

We apply our methodology to a data set spanning the time interval 2005/01/01 to 2019/12/31 that contains the daily closing values of the following stock indices: 
\begin{center}
\begin{tabular}{c | l }
i & Index\\ \hline 
1 &  S\&P 500 \\
2 & NASDAQ Composite  \\
3 &  Dow Jones Industrial Average \\
4 & DAX Performance Index  \\
5 & EURONEXT 100 \\
6 &  KOSPI Composite Index \\
7&  Nikkei 225
\end{tabular}
\end{center}
This data period also includes extreme events during the 2007 -- 2008 global financial crisis.
The 7-dimensional time series is labeled by trading days $t= 1, 2, \dots, 3358$ and quoted in US\$:

$$\p_t  
= 
\begin{pmatrix}
\p_{1,t}&
\p_{2,t}&
\p_{3,t }&
\p_{4,t}&
\p_{5,t}&
\p_{6,t}&
\p_{7,t}
\end{pmatrix}
$$
with $\p_{i,t}$ being the time $t$ US\$-price of index $i$, $i=1,2, \dots, 7$.  We consider a 7-dimensional random vector\footnote{For simplicity, we do not considered any dependence of the returns on time or on underlying factors. In practice, conditional distributions are typically important to reflect the market conditions.} ${\bf X} = (X_1, X_2, \dots, X_7)$ that models the negative returns of the terminal US\$-value of the indices over a 10-day horizon. The corresponding time series is given by
$${\bf x}_t  = \left( \{- \; 1 \} \; \cdot   \;
\frac
{\p_{i,t+10} - \p_{i,t}}
{\p_{i,t}}
\right)
_{i=1,2,\dots,7}, 
\quad\quad\quad 
t =1,2, \dots, \underbrace{3348}_{\quad =: \; D} $$
We investigate the robust AV@R at level $0.95$ over a 10-day horizon of a portfolio that invests an equal dollar amount into each index. To be more specific, we consider the robust AV@R of the losses $X = \sum_{i=1} ^7 X_i $.

\subsubsection{Marginal distributions}  

We apply a semi-parametric approach to the seven marginal distributions. For the central part of the distributions we linearly interpolate the empirical distribution. Less data are available in the tail, and we fit Generalized Pareto Distributions (GPD) to the data which allow a convenient extrapolation of samples. 

The CDF of a GPD with two parameters $\xi$ and $\vartheta$ is given by 
\begin{equation*}
    G_{\xi,\vartheta}(x) = \begin{cases} 
    1-\left( 1+\frac{\xi x}{\vartheta}\right)^{-1/\xi}, & \mbox{ if } \xi \neq 0 \\
    1 - \exp(-x/\vartheta), & \mbox{ if } \xi = 0.
    \end{cases}
\end{equation*}
The GPD is supported on $x \geq 0$, if $\xi \geq 0$, and on $0 \leq x \leq -\vartheta/\xi$, if $\xi < 0$. 

To be specific, for any $i =1,2, \dots, 7$, let $(x_{i,(t)})_{t=1,\cdots, D}$ be the ordered sample of  the data $(x_{i,t})_{t=1,\cdots, D}$ with $x_{i,(1)} \leq \cdots \leq x_{i,(D)}$.
As GPD approximates a tail distribution for the excesses above some high threshold, we let $ x_{i,l}$ and $x_{i,u}$ be suitably chosen thresholds of lower and upper parts. We apply a graphical diagnostic for the threshold choice; it is based on a mean excess plot that should be linear in the threshold for a GPD . For alternative, more sophisticated methods we refer to \ci{SM12}.
The two parameters $(\xi_{i,l},\vartheta_{i,l})$ and $(\xi_{i,u},\vartheta_{i,u})$ are then determined by  maximum likelihood estimation based on the lower and upper excess data $( x_{i,l} - x_{i,(1)}, \cdots, x_{i,l} - x_{i,(t_{i,l}-1)} )$ and $( x_{i,( t_{i,u}+1)}-x_{i,u} , \cdots, x_{i,(D)}-x_{i,u} )$, respectively. 
The estimated thresholds (i.e., the upper and lower boundaries: $x_{i,u} =  x_{i,( t_{i,u})}$, $x_{i,l} = x_{i,(t_{i , l})}$) and the corresponding parameters are reported in Table \ref{tab:GPDtailpara}.

\begin{table}[h]
\centering
\begin{tabular}{c|ccc|ccc} 
\multirow{2}{*}{Index $i$} & \multicolumn{3}{c|}{Lower tail} & \multicolumn{3}{c}{Upper tail} \\ \cline{2-7} 
                                    & boundary $x_{i,l}$   & shape  $\xi_{i,l}$ & scale $\vartheta_{i,l}$  & boundary $x_{i,u}$   & shape $\xi_{i,u}$  & scale $\vartheta_{i,u}$ \\ \hline
1                                   &  -0.0466&	0.3172&	0.0131&	0.0445&	0.2749&	0.0219   \\
2                                   & -0.0594&	0.1826&	0.0173&	0.0562&	0.2801&	0.0212 \\
3                                   &   -0.0355&	0.1791&	0.0124&	0.0446&	0.2355&	0.0209                                  \\
4 & -0.0578&	-0.1227&	0.0269&	0.0495&	0.1042&	0.0263  \\
5                                   &    -0.0640&	0.0453&	0.0176&	0.0656&	-0.0307&	0.0266 \\
6                                   &   -0.0658&	0.4326&	0.0211&	0.0600&	0.2211&	0.0335    \\
7                                   &    -0.0568&	0.3802&	0.0139&	0.0413&	0.1226&	0.0236   \\ \hline
\end{tabular}
\caption{The thresholds (boundaries) and the estimated shape and scale parameters of GPDs in the lower and upper tail parts.}\label{tab:GPDtailpara}
\end{table}

The linearly interpolated empirical distribution function truncated in $[x_{i,l}, x_{i,u}]$ for index $i$ is 
\begin{equation*}
    H_i(x) = \begin{cases}
            0 & x < x_{i,l}\\
            \frac{k-1}{t_{i,u}-t_{i,l}} + \frac{x - x_{i,(k)}}{(t_{i,u}-t_{i,l})\left(x_{i,(k+1)}-x_{i,(k)}\right)} & x_{i,(k)} \leq x < x_{i,(k+1)}\\
            1 & x \geq x_{i,u},
    \end{cases}
\end{equation*}
where $t_{i,l}, t_{i,u}$ are the indices in the ordered data corresponding to the lower and upper thresholds.

The distribution of $X_i$ is finally modeled by 
\begin{equation}
   F_i(x) = \begin{cases} 
        p_l  (1 - G_{\xi_{i,l},\vartheta_{i,l}}(x_{i,l}-x)) &  x \leq x_{i,l} \\
        p_l + (1-p_l-p_u)H_i(x) & x_{i,l}< x \leq x_{i,u} \\
        (1-p_u) + p_u G_{\xi_{i,u},\vartheta_{i,u}}(x-x_{i,u}) &  x > x_{i,u},
        \end{cases}
\end{equation}
where $p_l = P(x \leq x_{i,u})$ and $p_u = P(x \leq x_{i,u})$.

The results of Anderson-Darling and Cramér-von-Mises tests for the GPD lower and upper tails are reported in Table \ref{tab:GOF}. For the goodness of test of GPD, we follow the approach in \ci{CS01}. The results do not provide evidence against the estimated marginal distributions.

\begin{table}[]
\begin{tabular}{c|c|rrrrrrr}
\hline
Lower Tail                         & index     & \multicolumn{1}{c}{1} & \multicolumn{1}{c}{2}         & \multicolumn{1}{c}{3} & \multicolumn{1}{c}{4} & \multicolumn{1}{c}{5} & \multicolumn{1}{c}{6}         & \multicolumn{1}{c}{7}         \\ \hline
                                   & statistic & 0.3069	&0.3889	&0.4054	&0.2127&	0.4712	&0.2882&	0.6304                        \\
\multirow{-2}{*}{Anderson-Darling} & p value   & 0.6675	& 0.4972	&0.4642&	0.9255&	0.3742&	0.7022	&0.1497 \\ \hline
                                   & statistic & 0.0386	&0.0369	&0.0580	&0.0227	&0.0562&	0.0256&	0.0564                        \\
\multirow{-2}{*}{Cramér-von-Mises} & p value   & 0.6984	& 0.7462	&0.4371&	0.9628&	0.4876&	0.9024&	0.4177
\\ \hline

Upper tail                         & index     & \multicolumn{1}{c}{1} & \multicolumn{1}{c}{2}         & \multicolumn{1}{c}{3} & \multicolumn{1}{c}{4} & \multicolumn{1}{c}{5} & \multicolumn{1}{c}{6}         & \multicolumn{1}{c}{7}         \\ \hline
                                   & statistic & 0.5173	&0.4178&	0.4654&	0.5473&	0.3104&	0.4984&	0.2146                        \\
\multirow{-2}{*}{Anderson-Darling} & p value   & 0.2706	& 0.4240&	0.3491&	0.2637&	0.7126&	0.3030&	0.9053                     \\ \hline
                                   & statistic & 0.0634	& 0.0432&	0.0747&	0.0463&	0.0362	&0.0694&	0.0186                        \\
\multirow{-2}{*}{Cramér-von-Mises} & p value   & 0.3582	& 0.6264	& 0.2660&	0.6104&	0.7917&	0.3106&	0.9827       \\ \hline
\end{tabular}
\caption{The results of Anderson-Darling and Cramér-von-Mises tests for the lower and upper GPD approximations.}\label{tab:GOF}
\end{table}

\subsubsection{Dependence}

Since AV@R focuses on the upper tails, we consider the following  
distortion functions with parameter $\alpha_1 = \alpha_2 = 0.04$ and $\alpha_0 = 1-\alpha_1 -\alpha_2$: 
    $$ 
    D_0(x) = \left\{
    \begin{array}{ll}
    \frac{x}{\alpha_0} & \mbox{if }x \le \alpha_0 \\
    1 & x > \mbox{if }\alpha_0 \\
    \end{array}
    \right. \quad \quad \quad\quad\quad
    D_1(x) = \left\{
    \begin{array}{ll}
    0 & \mbox{if }x \le \alpha_0 \\
    \frac{x-\alpha_0}{\alpha_1} & \mbox{if }\alpha_0 < x  \leq \alpha_0 + \alpha_1 \\ 1& 
\mbox{if }x > \alpha_0 + \alpha_1 \\
    \end{array}
    \right.$$
    \begin{equation}\label{dist:ex}
    D_2(x) = \left\{
    \begin{array}{ll}
    0  & \mbox{if }x \le \alpha_0 + \alpha_1 \\
    \frac{x-\alpha_0-\alpha_1}{\alpha_2}  & \mbox{if }\alpha_0 + \alpha_1 < x  \leq 1. \\
    \end{array}
    \right. \end{equation}

We split the data on the basis of the aggregate loss function 
$\{ \sum_{i=1} ^7 x_{i,t} \}_{t=1, \cdots, D}$ into three parts: extreme upper tail (-4\%), upper tail (4\%-8\%), and the remaining center and lower tail (8\%-100\%). A scatter plot of $(X_1,X_2), \cdots, (X_1,X_7)$ in Figure \ref{fig:scatter_uuc} illustrates the procedure of partitioning the data.
\begin{figure}[h!]
     \centering
     \includegraphics[width=0.9\textwidth]{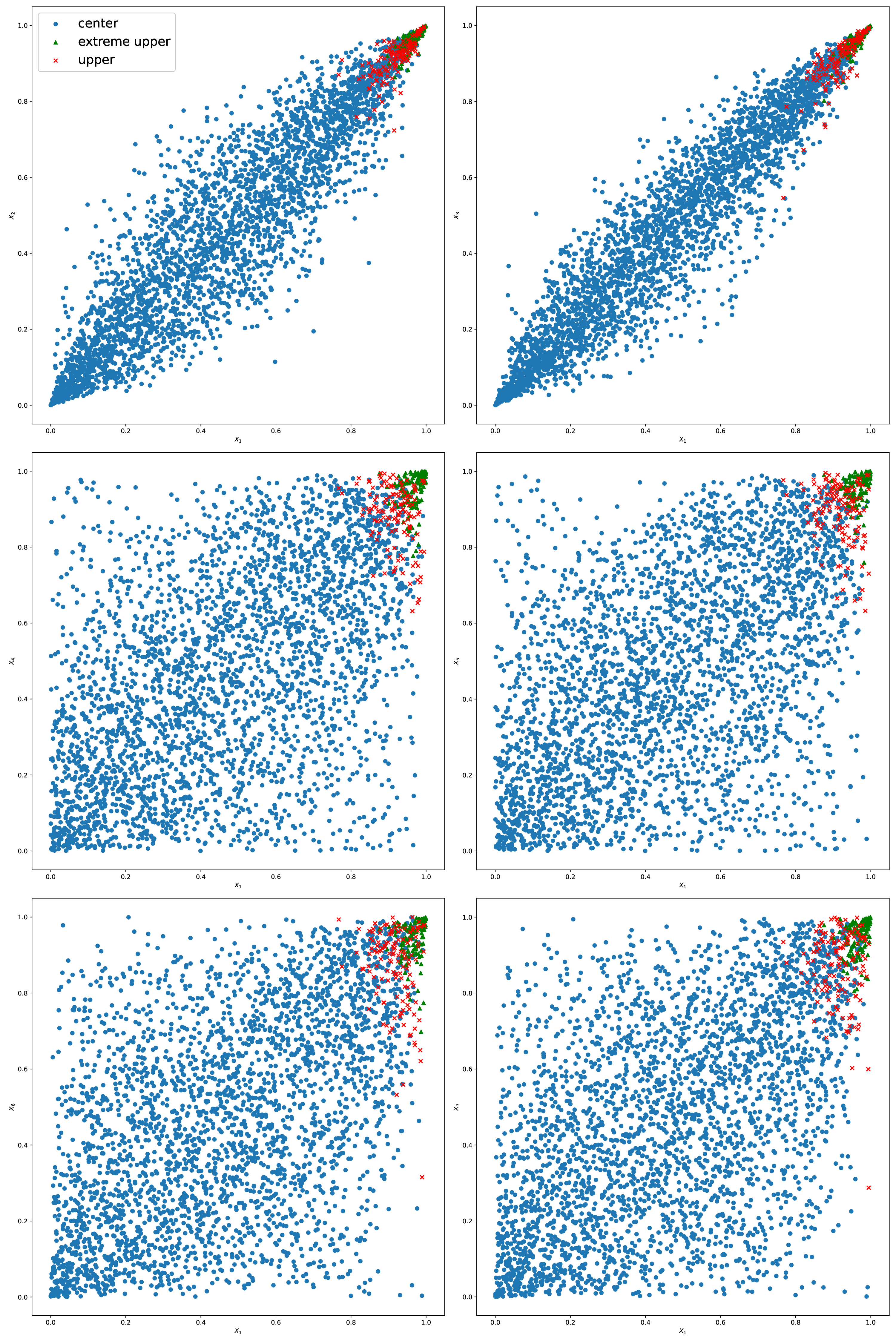}
     \caption{\label{fig:scatter_uuc} Scatter plot of $X_i$ against $X_1$ for $i=2, \cdots, 7$. The green triangles depict the data points in the extreme upper tail (-4\%), the red x the data points in the upper tail (4\%-8\%), and the blue circles the remaining data points (8\%-100\%).}  
\end{figure}

The dependence in the central part is modeled by a Gaussian copula corresponding to the estimated linear correlations. The estimates are based on the 92\% data (8\%-100\%) and can be found in Section \ref{app4:app}. For the tail parts, $K=16$ candidate copulas\footnote{The method is very flexible and could equally be applied to a larger set of copulas. The specific copulas are a potential choice due to an expert's opinion. This corresponds to model uncertainty that a priori is limited in this way.} are considered:
\begin{itemize}
\item Copulas $C_1$, $C_2$ are Gaussian, matching the linear correlation or Kendall's tau estimated from the upper (4\%-8\%) data, respectively.
\item The copulas $C_3$, $C_4$,$\dots$, $C_8$ are t-copulas whose parameters are calibrated on the basis of the upper (4\%-8\%) data:
\begin{itemize}
    \item the multivariate meta t-copulas $ (C_3, C_4, C_5) = ( C^{t}_{\nu^1,P^1}, C^{t}_{\nu^2,P^2}, C^{t}_{\nu^3,P^3} )$  with parameters $(\nu^l, P^l), \ l=1,2,3$; the superscript $l$ indicates the estimation method explained below;
    \item  the grouped t-copula $( C_6, C_7, C_8 ) =( C^{Gt}_{\bm \nu^1,\bm P^1}, C^{Gt}_{\bm \nu^2,\bm P^2}, C^{Gt}_{\bm \nu^3,\bm P^3} )$ that allow different subsets of the random variates to have different degrees of freedom parameters; we divide the indices into the three subgroups US, Europe, and Asia; for each $l=1,2,3,$ the vector $\bm \nu^l = (\nu_1^l, \nu_2^l, \nu_3^l)$ specifies the degrees of freedom for these three subgroups, and the superscript $l$ refers to the estimation method.
\end{itemize}
The first method ($l=1$) is ML estimation. The second method ($l=2$) exploits the approximated log-likelihood for the degrees of freedom parameter which increases the speed of the estimation.  
The third method ($l=3$) estimates the correlation matrix $P^3$ by Kendall’s tau for each pair and then estimates the scalar degree of freedom by ML given the fixed $P^3$. 
This method is useful when the dimension of the data is large, because the numerical optimization quickly becomes infeasible. 
The estimated correlation matrices as well as the degrees of freedom are not identical and sometimes even very different. 
\item The copulas $ C_9, C_{10}, \dots, C_{16}$ are constructed analogously to $ C_1, C_2, \dots, C_8$, but based on the extreme upper (-4\%) data.
\end{itemize}

The calibration results can be found in Section~\ref{app4:app}.

\subsubsection{Case studies}  

As a benchmark, we compute the AV@R  at level 0.95 when dependence is modeled by a single Gaussian copula estimated from the entire data set. The correlation matrix is given in Section~\ref{app4:app}, and the AV@R equals 0.514928 when the number of samples is $10^7$. 

We compare the benchmark to the algorithm based on DM copulas described in Section~\ref{sec3:sto_opt}. With an equal initial weight of $1/16$, a constant sample size $N_t = 10^6$ and step size $a=0.7$, the initial AV@R at level 0.95 corresponds to 0.652009 and is substantially higher than the benchmark. The DM method with copulas fitted to tail data provides a much better methodology in assessing downside risk than single Gaussian copulas. In fact, even if a DM method combines only Gaussian copulas for central and tail areas, the estimation results are often reasonable.\footnote{See Section \ref{app_3} for further evidence.} For the considered data, results are quite insensitive to the considered copulas, as long as they are fitted to different parts of the distribution and a DM copula is used. 

When running SA, the stopping time $t^*=t_{\min}$ equals 10 with an AV@R at level $0.95$ of $0.655033$ and an empirical standard deviation of $0.0019$ computed from the last ten iterations. The corresponding weights $\bar{\bm{\gamma}}_{10}^\top$ are 

{\tiny
        $$ 
        \begin{pmatrix}
        0.0497&0.0494&0.0676&0.0693&0.0503&0.0675&0.0693&0.0503&0.0630&0.0624&0.0696&0.0699&0.0616&0.0696&0.0694&0.0613 \\
        0.0595&0.0598&0.0643&0.0657&0.0618&0.0650&0.0624&0.0611&0.0624&0.0616&0.0649&0.0652&0.0628&0.0638&0.0573&0.0623 
        \end{pmatrix}
        $$}
 with an increment $ \bar{\bm{\gamma}}_{10} - \bar{\bm{\gamma}}_{9} $  with components of very small modulus (roughly less than 1/1000).
 
Now we switch to SAA. Setting $K^*=3$, our procedure selects on the basis of the estimated $\bar{\bm{\gamma}}_{10}$ for the application of SAA the copulas $C_4$ (t-copula estimated from the upper (4\%-8\%) data using approximate ML), $C_{12}$ (t-copula estimated from the extreme upper (-4\%) data using approximate ML), and $C_{14}$ (grouped t-copula estimated from the extreme upper (-4\%) data using ML). 
SAA with sample size $10^7$ and grid size $0.1$ applied to the corresponding three-dimensional grid picks only one copula, $C_{12}$ (t-copula estimated from the extreme upper (-4\%) data using approximate ML), associated with an AV@R of $0.655452$. The numerical analysis confirms that the AV@R values are insensitive to $\bar{\bm{\gamma}}$ near the solution. In the current example, the worst case AV@R is very close to the sum of the marginal AV@Rs, i.e., to comonotonic dependence of all components, that equals $0.656120$.

In summary, when computing AV@R at level 0.95, in the current example a reasonable amount of tail data is available to estimate the dependence of the factors in different parts of the distribution. In contrast to a single Gaussian copula, the DM method provides solutions in the considered family that are not very sensitive to the choice of the estimated component copulas. But instead of making ad hoc assumptions that select a specific components copula a priori, our algorithm demonstrates explicitly the strength of the DM method, identifies and substantiates the insensitivity to components a posteriori and finally reduces the dimensionality of the problem in the worst-case analysis.

\subsubsection{Robustness}

Another question one may ask is how the initial choice of $\alpha_0, \alpha_1, \alpha_2$ influences the final result of the algorithm. In fact, since the distortion functions in eq. \eqref{dist:ex} and thus the regions of different copulas depend on these parameters, we cannot expect that the copulas are invariant if $\alpha_0, \alpha_1, \alpha_2$ change.

We follow the same methodology as described above but vary $\alpha_1$. Recall that $\alpha_0 = 1-\alpha_1 - \alpha_2$ and that $\alpha_2$ is equal to $\alpha_1$. For a given $\alpha_1$, we split the data into three parts depending on the value of $\alpha_0, \alpha_1$ and $\alpha_2$; the data are segmented into the extreme upper $-\alpha_1\cdot 100\%$ part, the upper $\alpha_1 \cdot100\% - 2 \cdot \alpha_2 \cdot 100\%$, and the remaining part. We apply this procedure for $\alpha_1$: $0.03, 0.04, 0.05, 0.06$. 

Table \ref{tab:alpha} displays the worst case DM AV@R values together with the copulas chosen by our algorithm for the application of SAA when the dimension of the simplex is reduced. Both the  worst case DM AV@R and the corresponding copulas for both $D_1$ and $D_2$, in both cases $C_{12}$, are robust with respect to the choice of $\alpha_1$. However, as indicated in the third row of Table \ref{tab:alpha}, when we reduced the dimension before applying SAA and select $K^*$  candidate copulas from the SA results, different candidates are picked.

\begin{table}[h]
\begin{tabular}{c|cccc}
\hline
$\alpha_1$                & 0.03         & 0.04         & 0.05        & 0.06        \\ \hline
Worst case DM AV@R        & 0.655472     & 0.655452     & 0.655967    & 0.655872    \\
\small Dimension reduction: selected copulas& C8, C12, C14 & C4, C12, C14 & C4, C8, C12 & C4, C8, C12 \\ \hline
\end{tabular}
\caption{The AV@R values of the worst case DM copula (corresponding to $C_{12}$ for both $D_1$ and $D_2$) and the selected candidate copulas for the SAA procedure with $K^*=3$.}
\label{tab:alpha}
\end{table}

\subsection{Cyber risk}\label{sec4:cyber}
In an application to cyber risk, we study cyber incidents in USA from Privacy Rights Clearinghouse (\url{https://privacyrights.org/}) collected from January 2005 until October 2019.
For a time window ending in 2016 the data set was also analyzed  by \ci{EJ18}. We consider loss records in periods of two months and rearrange the data accordingly.  This reduces the number of zero entries and admits a tractable analysis that does not separate zero entries from strictly positive losses.\footnote{In contrast to our simplified approach, \ci{EJ18} build their analysis on a methodology described  in \ci{ec12} that expresses the joint probability function by copulas with discrete and continuous margins. Our algorithmic approach can also be applied to their methodology. The statistical estimation is, however, more difficult in this case.} 

The data records contain information on the period, the number of events in each period and the corresponding losses. We consider five types of breaches:\footnote{The description was obtained from the website \url{https://privacyrights.org/}.}

\begin{center}
\begin{tabular}{c | l }
i & Data Breaches Type  (Number of zero data): description \\ \hline 
1 & DISC (0): Unintended Disclosure Not Involving Hacking, Intentional Breach or Physical Loss \\
2 & HACK (0): Hacked by an Outside Party or Infected by Malware \\
3 & INSD (17): Insider - employee, contractor or customer\\
4 & PHYS (5): Physical - paper documents that are lost, discarded or stolen \\
5 & PORT (16): Portable Device - lost, discarded or stolen laptop, smartphone, memory stick, etc.
\end{tabular}
\end{center}

The data set contains 89 two months periods. The number of dates with observations of zero losses or no incidents is provided in parenthesis. In order to simplify the analysis, we replace all zero entries by a uniform $(0,1)$ random variable; since the severity for non-zero losses is typically on the order of $10^3$ or more, this approach does not substantially modify the data, but admits a simplified model with strictly positive marginal densities.

\subsubsection{Losses due to data breaches}

We assume that the two months breach records can be modeled by a 5-dimensional random vector ${\bf L} = (L_1, L_2, \dots, L_5)$. We employ a loss distribution approach to the marginal distributions, i.e., 
$$ L_i = \sum_{j=1}^{N^i} R_j^i$$
where $R_j^i, j=1, \cdots, N^i$ are iid random variables representing the severity of individual loss records and $N^i$ signifies the random number of losses. The dependence among ${\bf L}$ is captured by a copula which will be modelled as a DM copula. The details of selection and calibration will be given below. In general, one is interest in measuring the risk of some functional of ${\bf L}$. As an illustrative example, we focus on the AV@R at level $0.95$  of the sum of its components, $X = \sum_{i=1}^5 L_i.$

\subsubsection{Marginal distributions}  
Motivated by \ci{EJ18}, we model the frequency and the severity of loss records separately, choosing a lognormal distribution for the severity and a negative binomial distribution for the number of losses in each period. We estimate the parameters of the distributions and summarize the results in Table \ref{tab:4_2_marginal}; for the negative binomial we applied MLE, for the lognormal unbiased estimates of mean and variance of the log-data.
\begin{table}[h]
\centering 
\begin{tabular}{c|cc|cc}
     & \multicolumn{2}{c|}{Negative binomial} & \multicolumn{2}{c}{Lognormal} \\ 
Type & $r$                & $p$               & $\mu$         & $\sigma$      \\ \hline
1    & 2.8684             & 0.1209           & 9.8543        & 2.4364        \\
2    & 1.6333             & 0.0543            & 11.7851       & 2.5086        \\
3    & 0.9250             & 0.1196            & 6.9622        & 4.4965        \\
4    & 1.3117             & 0.0632            & 7.9432        & 2.9350        \\ 
5    & 0.9685             & 0.0685            & 7.9445       & 4.7539       \\ \hline
\end{tabular}
\caption{Estimation results for the loss frequency and severity of $\bf L$.}
\label{tab:4_2_marginal}
\end{table}

\noindent For the implementation of our algorithm, we finally generate and store $10^7$ samples of each distribution.

\subsubsection{Dependence}

Since the AV@R focuses on the upper tails, we continue to use the distortion functions in \eqref{dist:ex}, again choosing $\alpha_1 = \alpha_2 = 0.04$ and $\alpha_0 = 1-\alpha_1 -\alpha_2$. If AV@R at level 0.95 is computed by a corresponding DM copula, the dependence on the central and lower part (captured by $D_0$) is very low; this is confirmed in numerical experiments. For this reason, we focus on a particularly simple approach and use a Gaussian copula for this part with linear correlation estimated from the data. The correlation matrix $\Sigma_1$ is given in Section \ref{app4:cyb} in the appendix.

For the upper tails (captured by $D_1$ and $D_2$), we consider $K=8$ candidate copulas, namely two Gaussian copulas, two t- copulas, two Gumbel copulas, and two vine copulas; for the latter we refer to \ci{DBCK13} for further information. More specifically, the copulas are estimated as follows; the corresponding parameters for Gaussian, t and vine copulas are given in Section \ref{app4:cyb}:
\begin{itemize}
    \item $C_1$: Gaussian copula with the estimated linear correlation $\Sigma_1$;    
    \item $C_2$: Gaussian copula that matches the estimated Kendall's tau with corresponding correlation matrix $\Sigma_2$;
    \item $C_3$: t-copula with parameters $\nu_1$ and $P_1$ estimated by MLE;    
    \item $C_4$: t-copula with parameters $\nu_2$ and $P_2$ with $P_2$ matching Kendall's tau  and  $\nu_2$ estimated by MLE;
    \item $C_5$: Gumbel copula estimated by MLE with parameter $ \theta = 1.875123$;
    \item $C_6$: Gumbel copula estimated on the basis of a minimal Cramér-von Mises distance according to \ci{HMM13} with parameter $ \theta = 1.000061$;
    \item $C_7$: Regular vine copula estimated according to AIC;
    \item $C_{8}$: Regular vine copula estimated according to BIC.
\end{itemize}

\subsubsection{Case studies}  

As a benchmark, we compute the AV@R  at level 0.95 when dependence is modeled by the single Gaussian copula $C_1$ estimated from the entire data set. The unit for the reported AV@R values is always one million. The estimated AV@R at level 0.95  equals $ 45.6533$ in this case on the basis of $10^7$ samples. If we use copulas $C_2$, $C_3$, ... , $C_8$ points estimates range from about 45.2 to 53.1 with significant sampling error. As in Section~\ref{sec4:real}, we compare this benchmark to the result of the algorithm with DM copulas that was described in Section~\ref{sec3:sto_opt}. The DM approach provides a more sophisticated analysis of the worst case. 

With an equal initial weight of 1/8, a constant sample size $N_t = 10^6$ and step size $a = 0.7$, the initial AV@R at level 0.95 corresponds to $49.4159$.  Stopping SA according to the our stopping rule at $t^* = 10$, we obtain an estimated AV@R at level $0.95$ of $53.4793$ with an empirical standard deviation of $3.8764$ computed from the last ten iterations. The corresponding $\bar{\bm{\gamma}}^\top$ equals
        $$
        \begin{pmatrix}
        0.1187 & 0.1191 & 0.1232 & 0.1193 & 0.1307 & 0.1307 & 0.1291 & 0.1292 \\
        0 & 0 & 0 & 0 & 0.5751 & 0.4249 & 0 & 0
        \end{pmatrix}$$
 with increments $ (\bar{\bm{\gamma}}_{10} - \bar{\bm{\gamma}}_{9})^\top$ of an order of 1/500 or less.

Setting $K^*=3$ and switching to SAA, our algorithm selects  the copulas $C_5$ (Gumbel copula with $\theta = 1.875123$), $C_{6}$ (Gumbel copula with $\theta = 1.000061$), and $C_{8}$ (Regular vine copula according to BIC, Table \ref{tab:vine_bic}) on the basis of the estimate $\bar{\bm{\gamma}}_{10}$. Thus, SAA needs to be applied to a three-dimensional grid on
    $$\gamma_5^i + \gamma_6^i + \gamma_8^i =1, \ i=1,2,\; \quad \gamma^i_j \geq 0, \  i=1,2, j=5,6,8.
    $$
With a sample size of $10^7$ sample size and $0.1$ grid size, SAA selects the Gumbel copula $C_{5}$ for $D_1$ and the Gumbel copula $C_6$ for $D_2$ with a worst-case AV@R at level $95\%$ of $53.6990$. For the distortion $D_1$, the copula $C_6$ leads to almost the same result, i.e., for $D_1$ the sensitivity of the AV@R with respect to $C_{5}$ and $C_6$ is almost zero. This worst case AV@R may be compared to the comonotonic case, i.e., the sum of the marginal AV@Rs, which equals $55.133051$. This shows that within our setting (that limits the admissible dependence structure on the basis of an expert's opinion) model uncertainty is already reduced.

In summary, when computing AV@R at level 0.95, DM methods provide an excellent method for identifying the relevant low-dimensional dependence structures, when many data are available as illustrated in Section~\ref{sec4:real}. In the current example on cyber risk, data are scarce and tail copulas are chosen ad hoc on the basis of an expert's opinion. In this case, our algorithm easily identified the worst-case dependence and reduces the dimensionality at the same time. If only few data are available in the tail, the choice of tail copulas is restricted by only few constraints and the sensitivities of the AV@R within this class are more significant. In all cases, the worst-case AV@R on the basis of the DM copula provides a substantially better understanding of downside risk than single copulas fitted to the whole data. 

\section{Conclusion}\label{sec5:conc}

Uncertainty requires suitable techniques for risk assessment. In this paper, we combined stochastic approximation and stochastic average approximation to develop an efficient algorithm to compute the worst case average value at risk in the face of tail uncertainty. Dependence was modelled by the distorted mix method that flexibly assigns different copulas to different regions of multivariate distributions.
The method is computationally efficient and allows at the same time to identify copulas in a lower-dimensional mixture space that capture the worst case with high precision. We illustrated the application of our approach in the context of financial markets and cyber risk. Distorted mix copulas can flexibly adjust the dependence structure in different regions of a multivariate distribution. Our research indicated that they provide a powerful and flexible tool for capturing dependence in both the central area and tails of distributions.


\bibliographystyle{jmr}
\bibliography{RCM}


\newpage 
\appendix

\section{Appendix}\label{sec:appendix}

\subsection{Proof of Lemma \ref{lem:zeta}}\label{proof:lem:zeta}

Define $\zeta(u,\bar{\bm{\gamma}}) = u + \frac{\alpha_0}{1-p} \mathsf{E}[\Psi^0- u]^+ + \sum_{i=1}^m \sum_{j=1}^K \frac{\alpha_i \gamma_j^i}{1-p} \mathsf{E}[\Psi^{ij}- u]^+$, where $\Psi^0$ and $\Psi^{ij}$ are random variables having the distribution $G_0$ and $G_{ij}$ in (\ref{eq:mixturedist}), respectively.
The finiteness of the function $\zeta$ is guaranteed by the existence of the AV@R, or equivalently by $\mathsf{E}|\Psi^0|<\infty$ and $\mathsf{E}|\Psi^{ij}|<\infty$ for each $i$ and $j$.
Moreover, a convex function $\zeta(\cdot,\bar{\bm{\gamma}})$ has finite right and left derivatives for any $\bar{\bm{\gamma}}$.
Observe that
    \begin{eqnarray*}
        \frac{\zeta(u',\bar{\bm{\gamma}})-\zeta(u,\bar{\bm{\gamma}})}{u'-u} = 1+\frac{\alpha_0}{1-p}\frac{\mathsf{E}[\Psi^0- u']^+ - \mathsf{E}[\Psi^0- u]^+}{u'-u} + \sum_{i=1}^m \sum_{j=1}^K \frac{\alpha_i \gamma_j^i}{1-p} \frac{\mathsf{E}[\Psi^{ij}- u']^+ - \mathsf{E}[\Psi^{ij}- u]^+}{u'-u}.
    \end{eqnarray*}
When $u'>u$,
    \begin{equation*}
        \frac{\mathsf{E}[\Psi^0- u']^+ - \mathsf{E}[\Psi^0- u]^+}{u'-u} = \begin{cases}
            -1 \mbox{ if } \Psi^0 \geq u' \\
            0 \mbox{ if } \Psi^0 \leq u \\
            \mathsf{E}\left[\frac{-\Psi^0+u}{u'-u}\right] \in (-1,0) \mbox{ if } u<\Psi^0<u'.
        \end{cases}
    \end{equation*}
Then there exist $\rho(u,u')\in[0,1]$ for which
    \begin{equation*}
        \frac{\mathsf{E}[\Psi^0- u']^+ - \mathsf{E}[\Psi^0- u]^+}{u'-u} = -(1-\mathsf{P}(\Psi^0\leq u')) - \rho(u,u')(\mathsf{P}(\Psi^0\leq u')-\mathsf{P}(\Psi^0\leq u)).
    \end{equation*}
By letting $u'\downarrow u$, we have $\mathsf{P}(\Psi^0\leq u')$ converges to $\mathsf{P}(\Psi^0\leq u)$ which makes
    \begin{equation*}
        \lim_{u'\downarrow u} \frac{\mathsf{E}[\Psi^0- u']^+ - \mathsf{E}[\Psi^0- u]^+}{u'-u} = \mathsf{P}(\Psi^0\leq u)-1.
    \end{equation*}
Similarly, we can compute
    \begin{eqnarray*}
        \lim_{u'\downarrow u} \frac{\zeta(u',\bar{\bm{\gamma}})-\zeta(u,\bar{\bm{\gamma}})}{u'-u} &=& 1+\frac{\alpha_0}{1-p}(\mathsf{P}(\Psi^0\leq u)-1) + \sum_{i=1}^m \sum_{j=1}^K \frac{\alpha_i \gamma_j^i}{1-p} (\mathsf{P}(\Psi^{ij}\leq u)-1) \\
        &=& 1-\frac{1}{1-p}+\frac{\alpha_0}{1-p}\mathsf{P}(\Psi^0\leq u) + \sum_{i=1}^m \sum_{j=1}^K \frac{\alpha_i \gamma_j^i}{1-p} \mathsf{P}(\Psi^{ij}\leq u)
    \end{eqnarray*}
which is $\frac{\partial^+\zeta}{\partial u}(u,\bar{\bm{\gamma}})$.
Analogously, we can compute $\frac{\partial^-\zeta}{\partial u}(u,\bar{\bm{\gamma}})$. The remaining results are now straightforward.

\subsection{Calibrations with large amounts of data}\label{app_3}

We provide an illustrative example supporting the claim that the DM method provides a good statistical framework for estimating the risk measure AV@R if a large amount of data is available. This claim refers, of course, to the DM method of \ci{LYY14}  itself. Computing a worst-case is thus not an issue in this case study. 
 
We consider a setting that modifies Example \ref{ex1} ($d=2$) as follows. Data are generated by a collection of models with $X = X_1 + X_2$ where the dependence between $X_1$ and $X_2$ is correctly described by one of the following copulas  
\begin{itemize}
    \item a Gaussian copula with $\rho = 0.7$;
    \item a t-copula with $\rho = 0.7, \nu = 1$;
    \item a Gumbel copula with $\theta = 1.7095$;
    \item a Cuadras-Augé copula with $\theta = 0.8$.
\end{itemize}
The Cuadras-Augé copula
$$C^{CA}(u_1, u_2) = \min(u_1, u_2)^\theta (u_1u_2)^{1-\theta}, \ \theta \in (0,1],$$ 
is upper tail dependent and an extreme value copula. The marginal distributions of $X_1$ and $X_2$ are inverse Gaussian with parameters $(\mu, \lambda)$ equal to $(1,0.5)$ and $(1, 1.2)$, respectively. For each model we compute the AV@R from SAA using the $5 \cdot 10^7$ number of samples as an approximation of the true AV@R; the results are displayed on the second row of Table \ref{tab:Exp_true} and denoted by `True AV@R'.

The numerical experiment is then conducted as follows. We generate data with sample size $10^3$ and $10^4$ from the given model (the given true copula). These are then used to estimate parameters of the following copulas and to finally compute AV@R measurements from them using a sample size $10^6$:
\begin{enumerate}
    \item a Gaussian DM copula with $\alpha_1=0.1$, whose component copulas for the three parts, $D_0, D_1, D_2$ are all Gaussian copulas; 
    \item  a Gaussian DM copula for the optimal $\alpha_1$ (as explained below);
    \item a Gaussian copula;
    \item a t-copula;
    \item a Gumbel copula. 
\end{enumerate}
Letting $\alpha_0 = 1-\alpha_1$, and $\alpha_1 = \alpha_2$, the distortion functions are:
$$ 
    D_0(x) = \left\{
    \begin{array}{ll}
    \frac{x}{\alpha_0} & \mbox{if }x \le \alpha_0 \\
    1 & x > \mbox{if }\alpha_0 \\
    \end{array}
    \right. \quad \quad \quad\quad\quad
    D_1(x) = \left\{
    \begin{array}{ll}
    0 & \mbox{if }x \le \alpha_0 \\
    \frac{x-\alpha_0}{\alpha_1} & \mbox{if }\alpha_0 < x  \leq \alpha_0 + \alpha_1 \\ 1& 
\mbox{if }x > \alpha_0 + \alpha_1 \\
    \end{array}
    \right.$$
    \begin{equation*}
    D_2(x) = \left\{
    \begin{array}{ll}
    0  & \mbox{if }x \le \alpha_0 + \alpha_1 \\
    \frac{x-\alpha_0-\alpha_1}{\alpha_2}  & \mbox{if }\alpha_0 + \alpha_1 < x  \leq 1. \\
    \end{array}
    \right. \end{equation*}
The results of the parameter estimation are displayed in Table \ref{tab:Exp_truepara}.
\begin{table}[H]
\begin{tabular}{c|c|c|rrrr}
\hline
Data size                & Copulas           & Parameters             & \multicolumn{1}{c}{Gaussian} & \multicolumn{1}{c}{t} & \multicolumn{1}{c}{Gumbel} & \multicolumn{1}{c}{Cuadras-Augé} \\ \hline
\multirow{12}{*}{$10^3$} & Gaussian DM & $\alpha_1$               & 0.1                          & 0.1                   & 0.1                        & 0.1                              \\
                         &             & $\hat{\rho}$ for $D_0$ & 0.5407                       & 0.5833                & 0.3273                     & 0.6718                           \\
                         &             & $\hat{\rho}$ for $D_1$ & -0.6668                      & -0.5114               & -0.7146                    & -0.5774                          \\
                         &             & $\hat{\rho}$ for $D_2$ & 0.016                        & 0.1752                & 0.2868                     & 0.0187                           \\
                         & Gaussian DM & $\alpha_1$               & 0.2                          & 0.12                  & 0.15                       & 0.1                              \\
                         &             & $\hat{\rho}$ for $D_0$ & 0.4514                       & 0.5849                & 0.2602                     & 0.6718                           \\
                         &             & $\hat{\rho}$ for $D_1$ & -0.6254                      & -0.5371               & -0.6672                    & -0.5774                          \\
                         &             & $\hat{\rho}$ for $D_2$ & 0.1467                       & 0.1889                & 0.2683                     & 0.0187                           \\ \cline{2-7} 
                         & Gaussian    & $\hat{\rho}$           & 0.7032                       & 0.7002                & 0.5973                     & 0.761                            \\
                         & t           & $\hat{\nu}$            & 3338586                      & 1.0062                & 5.3304                     & 1                                \\
                         &             & $\hat{\rho}$           & 0.6991                       & 0.7102                & 0.577                      & 1                                \\
                         & Gumbel      & $\hat{\theta}$         & 1.8341                       & 2.1882                & 1.6542                     & 3.1677                           \\ \hline
\multirow{12}{*}{$10^4$} & Gaussian DM & $\alpha_1$               & 0.1                          & 0.1                   & 0.1                        & 0.1                              \\
                         &             & $\hat{\rho}$ for $D_0$ & 0.5652                       & 0.5936                & 0.3456                     & 0.1191                           \\
                         &             & $\hat{\rho}$ for $D_1$ & -0.7711                      & -0.5744               & -0.6837                    & -0.562                           \\
                         &             & $\hat{\rho}$ for $D_2$ & -0.1119                      & -0.1026               & 0.1121                     & 0.1368                           \\
                         & Gaussian DM & $\alpha_1$               & 0.15                         & 0.12                  & 0.13                       & 0.1                              \\
                         &             & $\hat{\rho}$ for $D_0$ & 0.5185                       & 0.6076                & 0.3006                     & 0.1191                           \\
                         &             & $\hat{\rho}$ for $D_1$ & -0.6983                      & -0.605                & -0.6758                    & -0.562                           \\
                         &             & $\hat{\rho}$ for $D_2$ & -0.0157                      & -0.1003               & 0.1191                     & 0.1368                           \\ \cline{2-7} 
                         & Gaussian    & $\hat{\rho}$           & 0.7028                       & 0.6281                & 0.5906                     & 0.742                            \\
                         & t           & $\hat{\nu}$            & 4669186                      & 1.0107                & 7.7096                     & 1                                \\
                         &             & $\hat{\rho}$           & 0.7064                       & 0.6973                & 0.5938                     & 1                                \\
                         & Gumbel      & $\hat{\theta}$         & 1.8476                       & 2.0485                & 1.6837                     & 2.9518                           \\ \hline
\end{tabular}
\caption{Estimated parameters.}
\label{tab:Exp_truepara}
\end{table}

\begin{table}[H]
\begin{tabular}{c|c|rrrr}
\hline
True copulas              &                                                                               & \multicolumn{1}{c}{Gaussian} & \multicolumn{1}{c}{t} & \multicolumn{1}{c}{Gumbel} & \multicolumn{1}{c}{Cuadras-Augé} \\ \hline 
Data size                 & True AV@R                                             & 8.8405                       & 9.0755                & 9.0728                     & 9.2205                           \\ \hline
                          & \begin{tabular}[c]{@{}c@{}}Gaussian DM AV@R\\ ($\alpha_1 = 0.1$)\end{tabular}   & 9.2404                       & 9.2879                & 9.3345                     & 9.2093                           \\
                          & increment to true AV@R                                                                          & -0.3999                      & -0.2125               & -0.2617                    & 0.0112                           \\
                          & \begin{tabular}[c]{@{}c@{}}Gaussian DM AV@R\\ (optimal $\alpha_1$)\end{tabular} & \begin{tabular}[r]{@{}c@{}}8.7948\\ (0.2)\end{tabular}                      & \begin{tabular}[r]{@{}c@{}}9.1645\\ (0.12)\end{tabular}                & \begin{tabular}[r]{@{}c@{}}9.0836 \\ (0.15)\end{tabular}                     & \begin{tabular}[r]{@{}c@{}}9.2093\\ (0.1)\end{tabular}                           \\
                          & increment to true AV@R                                                                          & 0.0457                       & -0.0890               & -0.0107                    & 0.0112                           \\ \cline{2-6} 
                          & Gaussian AV@R                                                                 & 8.8089                       & 8.5252                & 8.5674                     & 8.9012                           \\
                          & increment to true AV@R                                                                          & 0.0316                       & 0.5503                & 0.5054                     & 0.3193                           \\
                          & t AV@R                                                                        & 8.8132                       & 9.0212                & 8.6768                     & 9.6490                           \\
                          & increment to true AV@R                                                                          & 0.0273                       & 0.0543                & 0.3960                     & -0.4285                          \\
                          & Gumbel AV@R                                                                   & 9.1872                       & 9.3263                & 9.0231                     & 9.5304                           \\
\multirow{-11}{*}{$10^3$} & increment to true AV@R                                                                          & -0.3467                      & -0.2508               & 0.0498                     & -0.3098                          \\ \hline
                          & \begin{tabular}[c]{@{}c@{}}Gaussian DM AV@R\\ ($\alpha_1 = 0.1$)\end{tabular}   & 9.1480                       & 9.1498                & 9.2478                     & 9.2560                           \\
                          & increment to true AV@R                                                                          & -0.3075                      & -0.0744               & -0.1750                    & -0.0467                          \\
                          & \begin{tabular}[c]{@{}c@{}}Gaussian DM AV@R\\ (optimal $\alpha_1$)\end{tabular} & \begin{tabular}[r]{@{}c@{}}8.8558  \\ (0.15)\end{tabular}                     & \begin{tabular}[r]{@{}c@{}}9.0162  \\ (0.12)\end{tabular}              & \begin{tabular}[r]{@{}c@{}}9.0541 \\ (0.13)\end{tabular}                    & \begin{tabular}[r]{@{}c@{}}9.2560 \\ (0.1)\end{tabular}                           \\
                          & increment to true AV@R                                                                          & -0.0153                      & 0.0593                & 0.0187                     & -0.0467                          \\ \cline{2-6} 
                          & Gaussian AV@R                                                                 & 8.8574                       & 8.6748                & 8.5548                     & 8.9677                           \\
                          & increment to true AV@R                                                                          & -0.0169                      & 0.4007                & 0.5181                     & 0.2529                           \\
                          & t AV@R                                                                        & 8.8647                       & 9.0564                & 8.6662                     & 9.6537                           \\
                          & increment to true AV@R                                                                          & -0.0242                      & 0.0190                & 0.4067                     & -0.4331                          \\
                          & Gumbel AV@R                                                                   & 9.2021                       & 9.2689                & 9.0558                     & 9.5103                           \\
\multirow{-11}{*}{$10^4$} & increment to true AV@R                                                                          & -0.3616                      & -0.1935               & 0.0171                     & -0.2897                          \\ \hline
\end{tabular}
\caption{ AV@R values in estimated models and increments to true AV@R.}
\label{tab:Exp_true}
\end{table}

The AV@Rs calculated in various estimated models and the resulting increments to the true AV@R are displayed in Table \ref{tab:Exp_true}. The Gaussian model with optimal $\alpha_1$ is estimated using  the empirical maximal likelihood. The main observation is that, despite being based on Gaussian copulas only, the DM method performs quite well in estimating the true AV@R. In particular, the Gaussian DM AV@R with optimal $\alpha_1$ outperforms all other models. This is particularly striking in the case of the  Cuadras-Augé copula, an extreme value copula. All copulas including the Gumbel copula perform worse in this case.

\subsection{Data in Section \ref{sec4:real}}\label{app4:app}

\subsubsection*{Dependence in the central part}
Dependence in the central part is modeled as the Gaussian copula whose correlation matrix consists of the estimated linear correlations. The estimated correlation matrix based on the 92\% data is 
\begin{center}
    $ \Sigma = \begin{bmatrix}
    1 & 0.9170 & 0.9494 & 0.4656 & 0.4706 &	0.5211 & 0.5012 \\
    0.9170 & 1 & 0.8185 & 0.4599 & 0.4495 &	0.5072 & 0.4724 \\
    0.9494 & 0.8185 & 1 & 0.4440 & 0.4389 &	0.4799 & 0.4698 \\
    0.4656 & 0.4599 & 0.4440 & 1 & 0.9152 & 0.2288 & 0.3191 \\
    0.4706 & 0.4495	& 0.4389 & 0.9152 &	1 &	0.2295 & 0.3060 \\
    0.5211 & 0.5072	& 0.4799 & 0.2288 &	0.2295 & 1 & 0.5717 \\
    0.5012 & 0.4724	& 0.4698 & 0.3191 &	0.3060 & 0.5717 & 1
    \end{bmatrix}$.
\end{center}

\subsubsection*{Dependence in the upper tail part}

We consider $K=16$ candidate copulas in the tail parts. 
The calibration results are summarized in the following.

\begin{itemize}
    \item $C_1$: Gaussian copula matching  the estimated linear correlation in the upper (4\%-8\%) data
    \begin{center}
    $ \Sigma_1 = \begin{bmatrix}
    1  & 0.8316  & 0.8727  & -0.2373 & -0.1814 & -0.1005 & -0.1121 \\
    0.8316  & 1  & 0.6434  & -0.1834 & -0.1717 & -0.0088 & -0.0635 \\
    0.8727  & 0.6434  & 1  & -0.2906 & -0.2663 & -0.2072 & -0.1561 \\
    -0.2373 & -0.1834 & -0.2906 & 1  & 0.7351  & -0.2962 & -0.2112 \\
    -0.1814 & -0.1717 & -0.2663 & 0.7351  & 1  & -0.2406 & -0.2076 \\
    -0.1005 & -0.0088 & -0.2072 & -0.2962 & -0.2406 & 1  & 0.0762  \\
    -0.1121 & -0.0635 & -0.1561 & -0.2112 & -0.2076 & 0.0762  & 1
    \end{bmatrix}$.
\end{center}
    \item $C_2$: Gaussian copula matching the estimated Kendall's tau in the upper (4\%-8\%) data 
    \begin{center}
    $ \Sigma_2 = \begin{bmatrix}
    1  & 0.8011  & 0.8571  & -0.2492 & -0.2328 & -0.1505 & -0.0449 \\
    0.8011  & 1  & 0.5947  & -0.1970 & -0.2012 & -0.0481 & -0.0090 \\
    0.8571  & 0.5947  & 1  & -0.3024 & -0.3445 & -0.1766 & -0.0952 \\
    -0.2492 & -0.1970 & -0.3024 & 1  & 0.7583  & -0.3262 & -0.2632 \\
    -0.2328 & -0.2012 & -0.3445 & 0.7583  & 1  & -0.2438 & -0.2332 \\
    -0.1505 & -0.0481 & -0.1766 & -0.3262 & -0.2438 & 1  & 0.0580  \\
    -0.0449 & -0.0090 & -0.0952 & -0.2632 & -0.2332 & 0.0580  & 1
    \end{bmatrix}$.
    \end{center}    
     \item $C_3$: t-copula estimated from the upper (4\%-8\%) data using ML 
    $$ \nu^1 = 10.8802, \quad P^1 = \begin{bmatrix}
    1 & 0.9763 & 0.9808 & 0.7991 & 0.8093 & 0.7712 & 0.7786 \\
    0.9763 & 1 & 0.9461 & 0.7950 & 0.7996 & 0.7855 & 0.7806 \\
    0.9808 & 0.9461 & 1 & 0.7725 & 0.7811 & 0.7382 & 0.7615 \\
    0.7991 & 0.7950 & 0.7725 & 1 & 0.9529 & 0.6949 & 0.7192 \\
    0.8093 & 0.7996 & 0.7811 & 0.9529 & 1 & 0.7050 & 0.7172 \\
    0.7712 & 0.7855 & 0.7382 & 0.6949 & 0.7050 & 1 & 0.7332 \\
    0.7786 & 0.7806 & 0.7615 & 0.7192 & 0.7172 & 0.7332 & 1
    \end{bmatrix}.$$

    \item $C_4$:  t-copula estimated from the upper (4\%-8\%) data using approximate ML  
    $$ \nu^2 = 4.8409, \quad P^2 = \begin{bmatrix}
    1 & 0.9884 & 0.9910 & 0.8914 & 0.8922 & 0.8662 & 0.8741 \\
    0.9884 & 1 & 0.9745 & 0.8876 & 0.8870 & 0.8765 & 0.8761 \\
    0.9910 & 0.9745 & 1 & 0.8813 & 0.8800 & 0.8507 & 0.8679 \\
    0.8914 & 0.8876 & 0.8813 & 1 & 0.9762 & 0.8283 & 0.8439 \\
    0.8922 & 0.8870 & 0.8800 & 0.9762 & 1 & 0.8318 & 0.8405 \\
    0.8662 & 0.8765 & 0.8507 & 0.8283 & 0.8318 & 1 & 0.8459 \\
    0.8741 & 0.8761 & 0.8679 & 0.8439 & 0.8405 & 0.8459 & 1
    \end{bmatrix}.$$

    \item $C_5$:  t-copula estimated from the upper (4\%-8\%) data using Kendall's tau and ML 
    $$ \nu^3 = 1.1237, \quad P^3 = \begin{bmatrix}
    1  & 0.8011  & 0.8571  & -0.2492 & -0.2328 & -0.1505 & -0.0449 \\
    0.8011  & 1  & 0.5947  & -0.1970 & -0.2012 & -0.0481 & -0.0090 \\
    0.8571  & 0.5947  & 1  & -0.3024 & -0.3445 & -0.1766 & -0.0952 \\
    -0.2492 & -0.1970 & -0.3024 & 1  & 0.7583  & -0.3262 & -0.2632 \\
    -0.2328 & -0.2012 & -0.3445 & 0.7583  & 1  & -0.2438 & -0.2332 \\
    -0.1505 & -0.0481 & -0.1766 & -0.3262 & -0.2438 & 1  & 0.0580  \\
    -0.0449 & -0.0090 & -0.0952 & -0.2632 & -0.2332 & 0.0580  & 1 
    \end{bmatrix}.$$

    \item $C_{6}$: Grouped t-copula estimated from the upper (4\%-8\%) data using ML 
    $$ \bm{\nu}^{1} = (4.8284, 4.4677, 12.8199),$$ 
    $$ \bm P^1 = \begin{bmatrix}
    1 & 0.9734 & 0.9780 \\
    0.9734 & 1 & 0.9388 \\
    0.9780 & 0.9388 & 1
    \end{bmatrix}, 
    \begin{bmatrix}
    1 & 0.9472 \\
    0.9472 & 1	\\
    \end{bmatrix},
    \begin{bmatrix}
    1	&0.7655	\\
    0.7655	&1	\\
    \end{bmatrix}.$$

    \item $C_{7}$: Grouped t-copula estimated from the upper (4\%-8\%) data using approximate ML 
    $$ \bm{\nu}^{2} = (1.2569,	1.7534,	13040952.7492),$$
    $$ \bm P^2 = \begin{bmatrix}
    1	&0.9713	&0.9793	\\
    0.9713	&1	&0.9375 \\
    0.9793 &0.9375 & 1
    \end{bmatrix}, 
    \begin{bmatrix}
    1	&0.9682	\\
    0.9682	&1	\\
    \end{bmatrix},
    \begin{bmatrix}
    1	&0.8517	\\
    0.8517	&1	\\
    \end{bmatrix}.$$

    \item $C_{8}$:  Grouped t-copula estimated from the upper (4\%-8\%) data using Kendall's tau and ML 
    $$ \bm{\nu}^{3} = (0.7287,	0.9180,	1.9547),$$
    $$ \bm P^3 = \begin{bmatrix}
    1	& 0.8011 & 0.8571 \\
    0.8011	& 1	& 0.5947 \\
    0.8571 & 0.5947 & 1
    \end{bmatrix}, 
    \begin{bmatrix}
    1	& 0.7583	\\
    0.7583	&1	\\
    \end{bmatrix},
    \begin{bmatrix}
    1	& 0.0580	\\
    0.0580 &1	\\
    \end{bmatrix}.$$
    
    \item $C_9$: Gaussian copula matching the estimated linear correlation in the extreme upper (-4\%) data 
    \begin{center}
    $ \Sigma_3 = \begin{bmatrix}
    1 & 0.9197 & 0.9592 & 0.5265 & 0.5336 & 0.5598 & 0.5913 \\
    0.9197 & 1 & 0.8860 & 0.4640 & 0.5141 & 0.5420 & 0.5060 \\
    0.9592 & 0.8860 & 1 & 0.4624 & 0.4786 & 0.4820 & 0.5998 \\
    0.5265 & 0.4640 & 0.4624 & 1 & 0.8503 & 0.5107 & 0.4501 \\
    0.5336 & 0.5141 & 0.4786 & 0.8503 & 1 & 0.4144 & 0.3761 \\
    0.5598 & 0.5420 & 0.4820 & 0.5107 & 0.4144 & 1 & 0.5388 \\
    0.5913 & 0.5060 & 0.5998 & 0.4501 & 0.3761 & 0.5388 & 1
    \end{bmatrix}$.
\end{center}
    \item $C_{10}$: Gaussian copula matching the estimated the Kendall's tau in the extreme upper (-4\%) data 
    \begin{center}
    $ \Sigma_4 = \begin{bmatrix}
    1 & 0.9088 & 0.9498 & 0.5068 & 0.4949 & 0.5044 & 0.5333 \\
    0.9088 & 1 & 0.8722 & 0.4565 & 0.4878 & 0.5028 & 0.4495 \\
    0.9498 & 0.8722 & 1 & 0.4615 & 0.4213 & 0.3868 & 0.5318 \\
    0.5068 & 0.4565 & 0.4615 & 1 & 0.8604 & 0.5050 & 0.4718 \\
    0.4949 & 0.4878 & 0.4213 & 0.8604 & 1 & 0.4223 & 0.3349 \\
    0.5044 & 0.5028 & 0.3868 & 0.5050 & 0.4223 & 1 & 0.4835 \\
    0.5333 & 0.4495 & 0.5318 & 0.4718 & 0.3349 & 0.4835 & 1
    \end{bmatrix}$.
    \end{center}   
    
    \item $C_{11}$: t-copula estimated from the extreme upper (-4\%) data using ML 
    $$ \nu^1_e = 25.9712, \quad P^1_e = \begin{bmatrix}
    1 & 0.9824 & 0.9909 & 0.8853 & 0.8961 & 0.8843 & 0.8906 \\
    0.9824 & 1 & 0.9738 & 0.8691 & 0.8926 & 0.8790 & 0.8685 \\
    0.9909 & 0.9738 & 1 & 0.8688 & 0.8825 & 0.8620 & 0.8919 \\
    0.8853 & 0.8691 & 0.8688 & 1 & 0.9553 & 0.8532 & 0.8378 \\
    0.8961 & 0.8926 & 0.8825 & 0.9553 & 1 & 0.8324 & 0.8264 \\
    0.8843 & 0.8790 & 0.8620 & 0.8532 & 0.8324 & 1 & 0.8620 \\
    0.8906 & 0.8685 & 0.8919 & 0.8378 & 0.8264 & 0.8620 & 1
    \end{bmatrix}.$$

    \item $C_{12}$: t-copula estimated from the extreme upper (-4\%) data using approximate ML 
    $$ \nu^2_e = 2.8781, \quad P^2_e = \begin{bmatrix}
    1 & 0.9864 & 0.9950 & 0.9062 & 0.9175 & 0.9321 & 0.9239 \\
    0.9864 & 1 & 0.9804 & 0.8876 & 0.9103 & 0.9243 & 0.9044 \\
    0.9950 & 0.9804 & 1 & 0.8965 & 0.9092 & 0.9202 & 0.9280 \\
    0.9062 & 0.8876 & 0.8965 & 1 & 0.9692 & 0.8957 & 0.8760 \\
    0.9175 & 0.9103 & 0.9092 & 0.9692 & 1 & 0.8838 & 0.8749 \\
    0.9321 & 0.9243 & 0.9202 & 0.8957 & 0.8838 & 1 & 0.9134 \\
    0.9239 & 0.9044 & 0.9280 & 0.8760 & 0.8749 & 0.9134 & 1
    \end{bmatrix}.$$

    \item $C_{13}$: t-copula estimated from the extreme upper (-4\%) data using Kendall's tau and ML 
    $$ \nu^3_e = 2.2042, \quad P^3_e = \begin{bmatrix}
    1 & 0.9088 & 0.9498 & 0.5068 & 0.4949 & 0.5044 & 0.5333 \\
    0.9088 & 1& 0.8722 & 0.4565 & 0.4878 & 0.5028 & 0.4495 \\
    0.9498 & 0.8722 & 1 & 0.4615 & 0.4213 & 0.3868 & 0.5318 \\
    0.5068 & 0.4565 & 0.4615 & 1 & 0.8604 & 0.5050 & 0.4718 \\
    0.4949 & 0.4878 & 0.4213 & 0.8604 & 1 & 0.4223 & 0.3349 \\
    0.5044 & 0.5028 & 0.3868 & 0.5050 & 0.4223 & 1 & 0.4835 \\
    0.5333 & 0.4495 & 0.5318 & 0.4718 & 0.3349 & 0.4835 & 1
    \end{bmatrix}.$$

    \item $C_{14}$: Grouped t-copula estimated from the extreme upper (-4\%) data using ML 
    $$ \bm{\nu}^{1}_e = (6.7057,	109.1713,	709778.4720),$$ 
    $$ \bm P^1_e = \begin{bmatrix}
    1	&0.9781	&0.9890	\\
    0.9781	&1	&0.9668 \\
    0.9890 & 0.9668 & 1
    \end{bmatrix}, 
    \begin{bmatrix}
    1	&0.9580	\\
    0.9580	&1	\\
    \end{bmatrix},
    \begin{bmatrix}
    1	&0.8848	\\
    0.8848	&1	\\
    \end{bmatrix}.$$

    \item $C_{15}$: : Grouped t-copula estimated from the extreme upper (-4\%) data using approximate ML 
    $$ \bm{\nu}^{2}_e = (1.0001,	1.1370,	1.7893),$$
    $$ \bm P^2_e = \begin{bmatrix}
    1	&0.9510	&0.9792	\\
    0.9510	&1	&0.9234 \\
    0.9792 & 0.9234 & 1
    \end{bmatrix}, 
    \begin{bmatrix}
    1	&0.9301	\\
    0.9301	&1	\\
    \end{bmatrix},
    \begin{bmatrix}
    1	&0.9092	\\
    0.9092	&1	\\
    \end{bmatrix}.$$

    \item $C_{16}$: : Grouped t-copula estimated from the extreme upper (-4\%) data using Kendall's tau and ML
    $$ \bm{\nu}^{3}_e = (1.2977,	1.2977,	1.5911),$$
    $$ \bm P^3_e = \begin{bmatrix}
    1	&0.9088	&0.9498	\\
    0.9088	&1	&0.8722 \\
    0.9498 & 0.8722 & 1
    \end{bmatrix}, 
    \begin{bmatrix}
    1	&0.8604	\\
    0.8604	&1	\\
    \end{bmatrix},
    \begin{bmatrix}
    1	&0.4835	\\
    0.4835	&1	\\
    \end{bmatrix}.$$
\end{itemize}

\subsubsection*{Single Gaussian copula estimated from the entire data set -- correlation matrix}

\begin{center}
   $  \begin{bmatrix}
   1 &	0.9388	& 0.9625 &	0.5961	& 0.6031	& 0.6302	& 0.6172 \\
   0.9388	& 1	& 0.8662	& 0.5906	& 0.5865	& 0.6197	& 0.5946 \\
   0.9625	& 0.8662	& 1	& 0.5775	& 0.5771	& 0.5965	& 0.5922 \\
   0.5961	& 0.5906	& 0.5775	& 1.0000	& 0.9333	& 0.4003	& 0.4699 \\
   0.6031	& 0.5865	& 0.5771	& 0.9333	& 1	& 0.4035	& 0.4623 \\
   0.6302	& 0.6197	& 0.5965	& 0.4003	& 0.4035	& 1.0000	& 0.6579 \\
   0.6172	& 0.5946	& 0.5922	& 0.4699	& 0.4623	& 0.6579	& 1
   \end{bmatrix}$.
\end{center}

\subsection{Data in Section \ref{sec4:cyber}}\label{app4:cyb}

\subsubsection*{Dependence in the central part}
Dependence in the central part is modeled as the Gaussian copula whose correlation matrix consists of the estimated linear correlations as 
 \begin{center}
    $ \Sigma_1 = \begin{bmatrix}
    1	& -0.0086	& -0.0224	& 0.0260	& -0.3324\\
    -0.0086	&1	& 0.1179	& -0.0210	&-0.2222\\
    -0.0224	&  0.1179	&1 &-0.1795	& 0.2620 \\
    0.0260	& -0.0210	&-0.1795	&1	& - 0.1342  \\
    -0.3324 & -0.2222	&  0.2620	& - 0.1342	&1 
    \end{bmatrix}.$
 \end{center}

\subsubsection*{Dependence in the tail parts}

\begin{itemize}

\item  
    $ \Sigma_2 = \begin{bmatrix}
    1	& 0.0605	& 0.1459	& 0.0437	& -0.1554\\
    0.0605	&1	& -0.0233	& 0.0313	&-0.2593\\
    0.1459	& -0.0233	&1 &-0.1787	& 0.3949 \\
    0.0260	& 0.0313	&-0.1787	&1	& -0.1677  \\
    -0.1554 & -0.2593	&  0.3949	& -0.1677	&1 
    \end{bmatrix}$
\item    $ \nu_1 = 27.5747, \quad P_1 = \begin{bmatrix}
    1	&0.6895	& 0.2085	&0.5510	& 0.3492\\
    0.6895	&1	& 0.2257	&0.6665	& 0.4838\\
    0.2085	& 0.2257	&1	& -0.3175	& -0.4300\\
    0.5510	& 0.6665	& -0.3175	&1	&0.8533 \\
    0.3492	& 0.4838	& -0.4300	& 0.8533	& 1
    \end{bmatrix}$
 \item     $ \nu_2 = 3.6372, \quad P_2 = \Sigma_2$     
\item $C_7$: The regular vine copula is estimated according to AIC. For more information, we refer to  \ci{DBCK13}. The estimation was conducted by the vine copula package in R. \url{ https://cran.r-project.org/web/packages/VineCopula/VineCopula.pdf}. The selected trees, pair copulas and the estimated parameters are given in Table \ref{tab:vine_aic}.
 \item $C_{8}$:  The regular vine copula is estimated according to BIC. For more information, we refer to  \ci{DBCK13}. The estimation was conducted by the vine copula package in R. \url{ https://cran.r-project.org/web/packages/VineCopula/VineCopula.pdf}. The selected trees, pair copulas and the estimated parameters are provided in Table \ref{tab:vine_bic}.
\end{itemize}

\newpage

   \begin{table}[h]
    \centering
    \begin{tabular}{lllll}
    Tree & pair        & copula                  & \multicolumn{2}{l}{parameters} \\ \hline
    1    & 3,4         & Frank                   & 3.63           &               \\
         & 5,2         & Frank                       & 6.12           & 2             \\
         & 5,1         & Frank               & 4.39            & 0.06          \\
         & 5,3         & Tawn type 2 180 degrees & 3.57          & 0.46          \\ \hline
    2    & 5,4 ; 3     & Tawn  type 2           &  4.12            &  0.37          \\
         & 1,2 ; 5     & Tawn type 2 180 degrees           &    1.78            &    0.41           \\
         & 3,1 ; 5     & Survival BB8                       & 6          &  0.17            \\ \hline
    3    & 1,4 ; 5,3   &  Tawn  type 1           & 3.61          &    0.39            \\
         & 3,2 ; 1,5   & Joe                      & 1.11          &         \\ \hline
    4    & 2,4 ; 1,5,3 &  Tawn type 2 180 degrees           & 1.6             & 0.31        
    \end{tabular}
    \caption{The structure, pair copulas, and parameters of the regular vine copula $C_7$ estimated according to AIC}\label{tab:vine_aic}.
    \end{table}
    
    \vspace{1cm}
    
    \begin{table}[h]
    \centering
    \begin{tabular}{lllll}
    Tree & pair        & copula                  & \multicolumn{2}{l}{parameters} \\ \hline
    1    & 3,4         & Frank                   & 3.63           &               \\
         & 5,2         & Frank                       & 6.12           & 2             \\
         & 5,1         & Frank               & 4.39            & 0.06          \\
         & 5,3         & Tawn type 2 180 degrees & 3.57          & 0.46          \\ \hline
    2    & 5,4 ; 3     &  Tawn  type 2           & 4.12         & 0.37       \\
         & 1,2 ; 5     &  Survival Joe             & 1.55          &               \\
         & 3,1 ; 5     &  Independence   &          &         \\ \hline
    3    & 1,4 ; 5,3   &  Tawn  type 1  & 3.49        & 0.39      \\
         & 3,2 ; 1,5   & Independence 
            &         &               \\ \hline
    4    & 2,4 ; 1,5,3 & Clayton                 & 0.52         &              
    \end{tabular}
    \caption{The structure, pair copulas, and parameters for the regular vine copula $C_8$ according to BIC}\label{tab:vine_bic}.
    \end{table}

\end{document}